\documentclass[lettersize,journal]{IEEEtran}
\IEEEoverridecommandlockouts
\usepackage[utf8]{inputenc}
\usepackage{color,setspace}
\usepackage{cite}
\usepackage{amssymb,amsfonts}
\usepackage{enumerate}
\usepackage{bbm}
\usepackage{graphicx}
\usepackage{epstopdf}
\usepackage{subfigure}
\usepackage{stfloats}
\usepackage[cmex10]{amsmath}
\usepackage{tasks}
\usepackage{enumitem}
\usepackage{bm}
\usepackage{xcolor}
\usepackage{xspace}
\usepackage{colortbl}
\usepackage{amsthm}
\usepackage{amsmath}
\usepackage{algpseudocode}
\usepackage[ruled,lined,commentsnumbered,linesnumbered]{algorithm2e}
\usepackage{makecell}

\def\BibTeX{{\rm B\kern-.05em{\sc i\kern-.025em b}\kern-.08em
T\kern-.1667em\lower.7ex\hbox{E}\kern-.125emX}}

\newtheorem{theorem}{\textit{Theorem}}
\newtheorem{lemma}{{\textit{Lemma}}}
\newtheorem{proposition}{\textit{Proposition}}
\newtheorem{corollary}{\textit{Corollary}}
\newtheorem{definition}{\textit{Definition}}
\newtheorem{remark}{\textit{Remark}}

\title{An AoI-oriented Time-Frequency Distributed Access Mechanism in Wireless Sensor Networks with Spectrum Division}
\author{\IEEEauthorblockN{Jingwei Liu, Fang Liu, Wing Shing Wong, \textit{Life Fellow}, \textit{IEEE}, Yuan-Hsun Lo, \textit{Member}, \textit{IEEE},\\and Chung Shue Chen, \textit{Senior Member}, \textit{IEEE}}
\thanks{

J. Liu is with the School of Science and Engineering, The Chinese University of Hong Kong (Shenzhen), Shenzhen 518172, China (e-mail: liujingwei@cuhk.edu.cn).

F. Liu is with the College of Electronics and Information Engineering,
Shenzhen University, Shenzhen 518060, China (e-mail: liuf@szu.edu.cn).

W. Wong is with the Department of Information Engineering, The
Chinese University of Hong Kong, Hong Kong SAR, China (e-mail: wswong@ie.cuhk.edu.hk).

Y. Lo is with the Department of Applied Mathematics,
National Pingtung University, Pingtung 900391, Taiwan (e-mail:
yhlo0830@gmail.com).

C. Chen is with Nokia Bell Laboratories, Paris-Saclay Center,
91300 Massy, France (e-mail: chung\_shue.chen@nokia-bell-labs.com).
}
}
\date{August 2026}

\begin{document}





\maketitle

\begin{abstract}
The increasing adoption of spectrum-division techniques enables concurrent uplink transmissions over multiple orthogonal resources, yet low-overhead access design with effective information freshness remains insufficiently studied for large-scale randomly activated sensor networks.
In this paper, we apply the age of information (AoI) to measure information freshness and propose an AoI-efficient deterministic time-frequency distributed access (D-TFDA) mechanism.
D-TFDA combines centralized configuration and distributed operation through a periodic token-based time-frequency structure, which provides sensors with collision-free and predictable transmission opportunities without considerable run-time overhead.
We develop an analytical framework to characterize the long-term average AoI (AAoI) by exploiting the periodicity of the token assignment pattern and modeling the steady local state of each sensor with a one-dimensional discrete-time Markov chain (DTMC).
We further reveal structural properties of the token assignment pattern and identify AoI-equivalent token clusters, which substantially reduce the search space of the AAoI-optimal token allocation problem.
Based on this structure, we formulate the reduced problem as a linear programming (LP) problem and develop an AAoI-optimal search algorithm, together with an auction-inspired heuristic algorithm of lower complexity.
Simulation results validate the proposed AAoI analysis, demonstrate the effectiveness of the token allocation algorithms, and show that D-TFDA achieves substantially lower AAoI than optimized random access baselines by avoiding collisions and exploiting heterogeneous sensor--resource transmission reliability.
\end{abstract}

\begin{IEEEkeywords}
Age of information, frequency-division, wireless sensor network, distributed access, optimization.
\end{IEEEkeywords}

\section{Introduction}
\IEEEPARstart{I}N recent years, advanced wireless communication standards have incorporated increasingly sophisticated frequency-division multiple access techniques to support dense and heterogeneous wireless connectivity.
Such as physical resource block (PRB) over bandwidth parts in 5G new radio, orthogonal frequency division multiple access (OFDMA) in IEEE 802.11ax/be, and physical uplink shared channel (NPUSCH) in narrowband Internet-of-Things (NB-IoT) \cite{9566742,9442429,9194757}.
By partitioning the available bandwidth into multiple orthogonal frequency-domain resource units (RUs), these systems enable concurrent uplink transmissions from multiple devices over the same time interval, thereby improving spectral utilization, network capacity, and access latency.
As a result, networks built on these standards are anticipated to support a broad range of time-sensitive applications in which the timely delivery of fresh information is crucial.
Typical examples include industrial automation, intelligent transportation, vehicular networks, environmental monitoring, and virtual reality services, where stale observations or control messages may degrade reliability, control accuracy, or user experience \cite{abdullah2015real,10.1145/3695248,s23083880,4550808}.
However, conventional performance metrics, such as throughput, delay, and packet loss probability, cannot fully characterize the freshness of the latest information available at the receiver \cite{sun2022age,kosta2017age,6195689}.
To quantify this notion of information freshness, the age of information (AoI) metric has been introduced, which is defined as the time elapsed since the generation of the most recently received status update at the destination \cite{6195689,sun2022age,5984917,10.1145/3323679.3326520,8514816}.
Therefore, in modern wireless networks with orthogonal frequency-domain resource partitioning, it is essential to design multiple access mechanisms from an AoI perspective rather than relying solely on conventional criteria.

Motivated by this, AoI-oriented multiple access design has received increasing attention in wireless networks with orthogonal transmission resources.
Extensive efforts have investigated centralized AoI-aware scheduling policies, where a central controller coordinates the transmissions of multiple users according to the network state (e.g., see \cite{9055353,9155420,10722850,9792409,9882370}), where the network-wide time-average AoI is typically optimized by jointly determining which users should transmit and which orthogonal resources should be assigned to them over time.
Such strategies are appropriate for scenarios with highly active users and non-large-scale networks \cite{9097306,8954939}.
Alternatively, among the various time-critical applications, the frequency-division techniques are also applicable to wireless dense status-updating sensor networks, where a large number of sensors periodically or randomly report the latest status of dynamic physical processes, such as temperature, motion, or channel conditions, to a central monitor \cite{YICK20082292,6532482}.
In many such networks, sensors enter active periods only sporadically and generate status-update packets intermittently while remaining idle otherwise, resulting in low-activity yet bursty status update traffic in the long term.
This type of behavior commonly arises in, for instance, vehicle networks, where sensors deployed along streets are triggered by the random passage of vehicles, and environmental monitoring networks, where sensors are activated and detect information regularly when wild animals occasionally appear within their sensing range \cite{8340257,10.1145/1869983.1869997}.
For these systems, maintaining fresh status information at the monitor is essential, since outdated messages may degrade monitoring accuracy, inference reliability, or control performance.
However, directly applying centralized scheduling policies to such dense status-updating sensor networks may incur substantial coordination overhead due to local state acquisition and device identification, significantly reducing the efficiency of freshness-related communication.

Accordingly, AoI-oriented distributed access has emerged as a natural design direction for dense status-updating networks, as it can avoid frequent exchange of device-specific scheduling commands and instantaneous local-state reports during network operation.
Under this paradigm, devices determine their transmission attempts according to locally available information and pre-specified access rules.
As a result, unlike centralized scheduling policies that require coordination signaling when arranging transmissions, distributed access can operate with limited run-time signaling and thus maintain low network overhead.
Existing works have investigated random access, which represents a major class of distributed access mechanisms.
The authors of \cite{9785624} proposed an age-optimized slotted-ALOHA protocol for freshness-aware random access.
In \cite{10323421}, Fresh-CSMA was devised as an AoI-oriented carrier-sensing access mechanism for single-hop wireless networks.
In addition to shared-channel random access, the AoI performance of the uplink OFDMA-based random access (UORA) mechanism introduced in IEEE 802.11ax was analyzed and optimized in \cite{10945427}.
Nevertheless, random access remains inherently contention-based, and the AoI performance can be impaired by the uncertainty of successful transmission opportunities.
When multiple active sensors generate status updates within overlapping active periods, bursty transmission demands may intensify contention over the shared transmission resources.
The resulting transmission collisions and backoff procedures in some mechanisms may further introduce irregular service intervals and delay the delivery of fresh updates.
Consequently, although random access is attractive for the low network overhead, its collision-prone and contention-induced randomness may undermine its capability to provide consistently effective AoI performance in the considered networks.
To our knowledge, beyond random access, the design of AoI-efficient distributed access mechanisms in the networks applying orthogonal frequency-domain resources has not been thoroughly investigated in the literature.

{To address this gap, this paper focuses on a wireless sensor network with orthogonal frequency-domain transmission resources.
We consider a time-slotted uplink network, where multiple time-sensitive sensors randomly enter active durations, periodically generate status update packets during each active period, and transmit their latest updates to an access point (AP).
The objective is to develop a low-overhead distributed access mechanism and optimize its long-term average AoI (AAoI) performance under such random status update traffic.}
The main contributions of this work are summarized as follows.
\begin{itemize}
    \item 
    {We develop a deterministic time-frequency distributed access (D-TFDA) mechanism, by which a token assignment is configured before network operation and then used by sensors for independent uplink transmissions.
    D-TFDA inherits the collision-free and predictable service structure of centralized scheduling while preserving the low-overhead operation of distributed access.
    To assess the AoI performance of the D-TFDA network, we capture the transmission opportunities of each sensor over a circular token frame by analyzing the periodicity of the network token assignment pattern.
    Furthermore, we model the steady local state of each sensor using a one-dimensional discrete-time Markov chain (DTMC), which enables tractable characterization of the service time of the status updates of the sensor.
    By doing so, we derive an analytical expression for approximating AAoI of the considered network.}
    \item We formulate the AAoI-optimal token allocation problem, where distinct transmission tokens are assigned to sensors to minimize the network-wide AAoI.
    To efficiently solve this combinatorial problem, we analyze the D-TFDA transmission pattern and reveal the relationship between token indices and occupied transmission resources.
    This leads to the identification of equivalent token clusters (ETCs), where tokens in the same cluster induce identical AAoI performance for the same sensor, thereby substantially reducing the feasible solution space.
    Leveraging the ETC structure, we develop two token allocation algorithms: an LP-based optimal search whose relaxation preserves optimality, and an auction-inspired heuristic that reduces computational and memory overhead while maintaining effective AoI performance in large-scale networks.
    \item We conduct numerical simulations to validate the proposed analytical framework and evaluate the effectiveness of the token allocation algorithms.
    The results show that the derived AAoI expression closely matches the simulated performance, and verify the AoI-equivalent structure revealed by the token pattern analysis.
    Moreover, the proposed LP-based and heuristic search algorithms significantly improve the AAoI performance over the unoptimized token allocation baseline.
    Further comparisons with random access baselines demonstrate that D-TFDA achieves substantially lower AAoI.
\end{itemize}

{The rest of this paper is organized as follows.
Section II reviews the related literature.
Section III introduces the system model and the performance metric.
Section IV presents the proposed D-TFDA mechanism and develops the corresponding AAoI evaluation framework.
Section V formulates the AAoI-oriented token allocation problem and proposes efficient token allocation algorithms.
Section VI offers numerical results to validate the theoretical analysis and evaluate the proposed algorithms.
Finally, Section VII concludes this work.}

\section{Related Work}
{This section introduces existing studies relevant to this work, broadly classified into three categories: random access over a single channel, that over multiple orthogonal frequency-domain resources, and cyclic scheduling.}

AoI-oriented design and analysis in uplink random access networks have attracted substantial research attention, with extensive studies in\cite{9162973,9377549,9785624,10138556,9007478,10323421}.
Specifically, the authors of \cite{9162973,9377549,9785624} investigated and optimized the AoI performance of threshold-based ALOHA networks, where each user attempts transmission with a prescribed probability once its instantaneous AoI exceeds a preset threshold.
In addition, \cite{10138556} presented an analytical characterization of the AoI performance for reservation-assisted framed slotted ALOHA, which consists of both reservation and data slots, and further developed the corresponding optimization framework.
The exploration of AoI for CSMA, another typical random access protocol, has been examined in \cite{9007478,10323421}.
In \cite{9007478}, the AoI behavior of CSMA networks with randomly distributed system parameters was studied. 
By leveraging the stochastic hybrid systems (SHS) framework, the authors obtained a closed-form expression for the network-wide average AoI, followed by further parameter optimization.
V. Tripathi et al. \cite{10323421} developed a distributed AoI-aware access scheme, termed Fresh-CSMA, by exploiting the backoff principle inherent in CSMA for single-hop wireless networks. It was shown that, under the same network state, Fresh-CSMA can reproduce the scheduling actions of the near-AoI-optimal centralized max-weight policy with high probability, and achieves performance close to it. 
Nevertheless, these random access mechanisms were developed for uplink networks transmitting over a single common channel.
Consequently, they cannot be directly applied to the network considered in our work, where the uplink spectrum is partitioned into multiple orthogonal resources.

{Different from the schemes in the above work, the UORA mechanism allows users to contend for transmissions through multiple RUs.
The performance of this mechanism has been studied in \cite{11251194,10945427}.}
Specifically, the authors of \cite{11251194} developed a fixed-point analytical framework for the IEEE 802.11ax hybrid access protocol, jointly considering UORA and scheduled access over RUs.
Based on this and Markov decision process theory, they further devised an optimal dynamic RU allocation policy that balances saturation throughput against average access delay.
However, this work focuses on conventional metrics, rather than optimizing the information freshness.
In \cite{10945427}, the AAoI of UORA networks was systematically analyzed and optimized under stochastic status update arrivals.
The authors constructed two coupled DTMCs to derive an analytical expression for the AAoI.
They further analyzed an approximated AAoI lower bound and developed efficient UORA parameter optimization algorithms that achieve near-optimal freshness performance.
Despite this, UORA is not particularly designed to prioritize freshness, thus does not necessarily provide inherent advantages in terms of~AoI.

Cyclic scheduling constitutes another line of research (e.g., see \cite{10228973,10480158,10189865,10912734} and references therein) related to our work, as it relies on a predesigned periodic transmission pattern to coordinate status update transmission.
In particular, C. Li et al. \cite{10228973} proposed Eywa, a general framework for constructing cyclic schedulers for a family of AoI-related optimization problems in IoT data collection networks with spectrum-division.
The framework was applied to minimize the weighted sum AoI and the required bandwidth under AoI constraints, yielding improved theoretical performance guarantees over existing approaches.
However, Eywa assumes each source has a fixed packet loss probability that is independent of the transmission resource.
This does not capture practical spectrum-division systems, where the transmission reliability of a given source may vary substantially across heterogeneous frequency-domain resources.
In \cite{10912734}, the AoI of massive-scale status-update systems was explored under periodically repeated transmission patterns, while accounting for heterogeneous service times and packet error probabilities.
The authors evaluated the weighted AoI associated with a given cyclic pattern, and further proposed scalable pattern-design algorithms for general networks by combining convex optimization with packet-spreading techniques.
Nevertheless, the underlying network model is restricted to a single shared uplink channel.
Furthermore, these studies adopt a generate-at-will traffic model in which a fresh status update is available whenever a source is scheduled.
Their analytical frameworks therefore cannot be directly extended to the randomly activated sensor network considered in this work. 

{Overall, prior studies have advanced the AoI analysis and design of random access and cyclic scheduling mechanisms.
Nevertheless, their approaches are not readily employed in the scenario considered in our work, which features multiple orthogonal resources and resource-dependent error rates.}

\section{System Model and Preliminaries}
\subsection{Network Model}
We study a time-slotted wireless uplink sensor network consisting of $N$ time-sensitive sensors, indexed by $i\in\{1,2,\cdots,N\}$, and one access point (AP). 
The time slots are indexed by $t\in\{1,2,\cdots,T\}$, where $T$ denotes the time horizon of the network.
The channel bandwidth is divided into $R$ $(R<N)$ equal-sized orthogonal resource units (RUs), indexed by $r\in\{1,2,\cdots,R\}$. 
Each sensor aims to transmit its locally generated status updates to the AP independently and timely, and each transmission is supposed to take one time slot through one RU.
We adopt the error-prone collision channel model, where any RU selected by more than one sensor results in a collision, leading to the failure of all transmissions on that RU.
Additionally, if RU $r$ is solely occupied by the transmission of sensor $i$, the success rate of this transmission is $p_{i,r}\in(0,1]$.
We assume a single-buffer configuration, that is, each sensor maintains a single buffer that only stores the latest status update.
Furthermore, sensors are classified as active and idle sensors.
At the beginning of each time slot, if sensor $i$ is idle, it could be activated and turn into an active sensor with active rate $\lambda_i$.
Once sensor $i$ is activated, it remains in the active state for $M_i$ consecutive slots, which is referred to as one active duration.
In the first time slot of each active duration, sensor $i$ generates a status update packet, and subsequently generates new packets periodically at intervals of $U_i$ slots for the remainder of the active duration.
We illustrate an example of the active and packet generation pattern of sensor $i$ with $M_i=5$ and $U_i=2$.

\begin{figure}[t]
	\centering
	\includegraphics[width=0.48\textwidth]{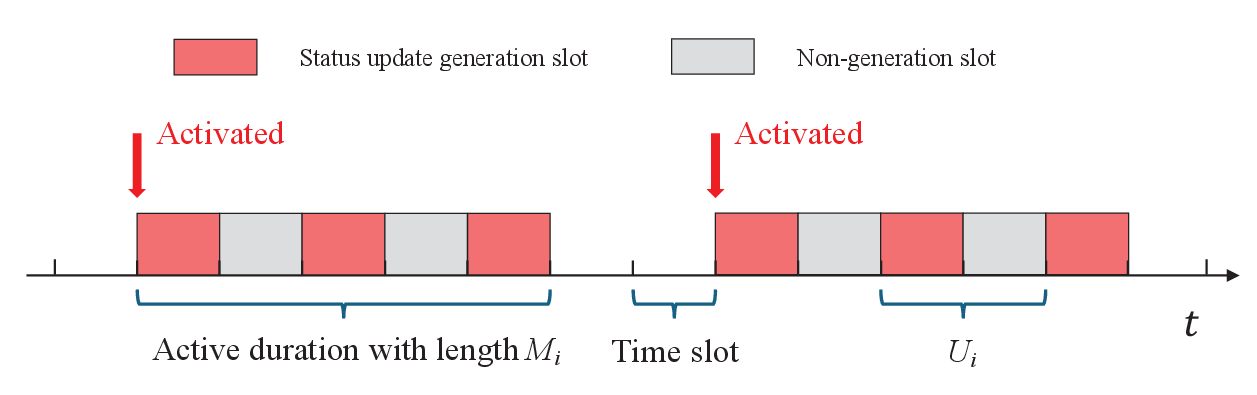}
	\caption{The active and packet generation pattern of sensor $i$ with $M_i=5$ and $U_i=2$.}
	\label{Active}
\vspace{-1em} 
\end{figure}

\subsection{Information Freshness Metric}
We apply the AoI metric, initially proposed in \cite{6195689}, to capture the information freshness across all sensors at the AP.
To mathematically characterize the AoI of the sensors, we first denote the local age, representing the system time of the most recent status update stored in the buffer of sensor $i$ in slot $t$, by $\delta_i(t)$.
The evolution of $\delta_i(t)$ can be presented as
\begin{equation}\label{LAoIevolve}
    \delta_i(t+1)=
    \begin{cases}
        0,& \text{if a status update is generated at}\\
        & \text{sensor $i$ in slot $t+1$,}\\
        \delta_i(t)+1,& \text{otherwise.}
    \end{cases}
\end{equation}
The AP records the current local age of sensor $i$ attached to the content of the status update packet just received from sensor $i$.
Therefore, the evolution of the AoI of sensor $i$ in slot $t$, denoted by $\Delta_i(t)$, can be expressed as
\begin{equation}\label{AoIevolve}
    \Delta_i(t+1)=
    \begin{cases}
        \delta_i(t)+1,& \text{if a status update of sensor $i$ is}\\
        & \text{received by the AP in slot $t$,}\\
        \Delta_i(t)+1,& \text{otherwise.}
    \end{cases}
\end{equation}
In this paper, we adopt the time average expected AoI (AAoI) as the performance metric of the network.
The definition of the network-wide AAoI is given by
\begin{equation}\label{AAoI}
    \overline{\Delta}\triangleq \lim_{T\to\infty}\frac{1}{NT}\mathbb{E}\left[\sum^T_{t=1}\sum^N_{i=1}\Delta_i(t)\right]=\frac{1}{N}\sum^N_{i=1}\overline{\Delta}_i,
\end{equation}
where $\overline{\Delta}_i\triangleq\lim_{T\to\infty}\frac{1}{T}\mathbb{E}\left[\sum^T_{t=1}\Delta_i(t)\right]$ denotes the AAoI of sensor $i$.

The AAoI performance of the considered network can be optimized through applying the centralized multiple scheduled access scheme, where all transmissions are coordinated by the AP.
However, extra overhead is required to schedule the transmission sequences of sensors and to acquire the local buffer status of the sensors under such strategies.
This is inefficient for the considered sensor networks, especially when the status update packets are typically relatively short.
In this context, we are motivated to develop a multiple access mechanism that does not require considerable overhead while achieving effective AoI performance.

\section{D-TFDA Mechanism}
{In this section, we first develop a multiple access mechanism named deterministic time-frequency distributed access (D-TFDA).}
Subsequently, we analyze and evaluate the AAoI performance of the considered network employing the D-TFDA scheme.

\subsection{Mechanism Design}
{The proposed D-TFDA is a centralized-configuration and distributed-operation multiple access mechanism, where the AP preconfigures and broadcasts a deterministic time-frequency access structure before network operation, after which sensors independently transmit over the assigned resources without requiring per-slot scheduling.}
Specifically, we introduce the following key concept.  
\begin{definition}
    We define each RU in each time slot as a unique transmission block (TB).
    In particular, the TB located at RU $r$ in slot $t$ is denoted by $\langle t;r\rangle$.
    Then, we define the order of all TBs.
    Given a slot $t$, we have $\langle t;r_1\rangle < \langle t;r_2\rangle$ for any $r_1<r_2$; given any $r_1,r_2$, we have $\langle t_1;r_1\rangle <\langle t_2;r_2\rangle$ for any $t_1<t_2$.
\end{definition}

The D-TFDA establishes a periodic token-based access structure that leverages all TBs to provide authorizations for uplink transmission access of all sensors in the network.
The implementation of the D-TFDA mechanism is given as follows.
\begin{enumerate}
    \item \underline{\textit{Token-to-TB assignment}}: The AP first specifies the index configuration of the $N$ sensors in the D-TFDA network and the $R$ RUs allocated to the network. 
    Each specific $N,R$ combination corresponds to a unique token assignment over the TBs.
    In particular, given $N$, the D-TFDA provisions $N$ types of tokens, indexed by $\tau\in\{1,2,\cdots,N\}$, which are cyclically assigned to the consecutive TB sequence in ascending order. 
    \item \underline{\textit{Token-to-sensor allocation}}: Subsequently, the AP determines a token allocation policy $\bm{\pi}\triangleq [\pi_1,\pi_2,\cdots,\pi_N]$, where we have $\pi_i=\tau$ denotes that token $\tau$ is allocated to sensor $i$.
    Notice that $\pi_i\neq \pi_{i^{'}}$ for $i\neq i^{'}$.
    \item \underline{\textit{Configuration signaling}}: The AP broadcasts the index information of the sensors, the RUs, and $\bm{\pi}$ to all sensors, thereby each sensor achieves the knowledge of the values of $N,R$ and the corresponding token assignment.
    \item \underline{\textit{Distributed access process}}: During the network operation, each sensor transmits its status update to the AP via the TBs assigned with the token allocated to the sensor in the corresponding time slots and RUs if the buffer of the sensor is not empty.
     
\end{enumerate}

\begin{figure}[t]
	\centering
	\includegraphics[width=0.48\textwidth]{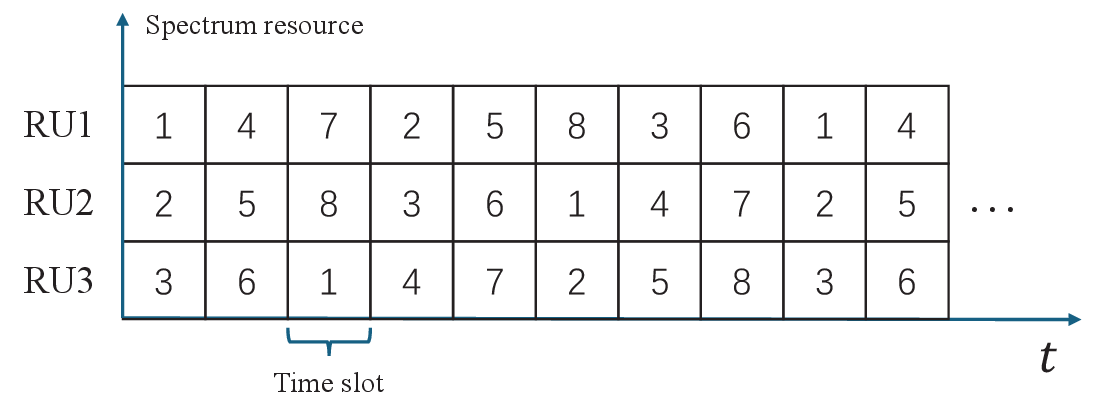}
	\caption{The network token assignment pattern of a D-TFDA network with $R=3$ and $N=8$.}
	\label{TA}
\vspace{-1em}
\end{figure}
Throughout this paper, we distinguish between the terms ``assignment'' and ``allocation'': specifically, ``assignment'' refers to the mapping of tokens to TBs, while ``allocation'' denotes the mapping of tokens to sensors.
The uplink transmissions of the D-TFDA mechanism are coordinated by the token assignment pattern over the TBs.
In this paper, we refer to the assignment pattern of all $N$ types of tokens as the network token assignment pattern and that of a certain token $\tau$ as the single token assignment pattern of token $\tau$.
Fig. \ref{TA} shows an example of the network token assignment pattern with $N=8,R=3$, where each grid represents a TB.
Different single token assignment patterns are orthogonal to each other across the TBs, thereby the transmissions of the different sensors cannot collide.
Clearly, there are $N$ TBs between two consecutive assignments of the same token and $R$ TBs in one time~slot.
Hence, due to $R<N$, each sensor performs at most one transmission in each time slot.

This collision-free, deterministic structure delivers AoI advantages comparable to centralized scheduling, yet without substantial network overhead. 
Moreover, compared to random access schemes, which incur collisions and irregular service intervals, the deterministic token-based structure of D-TFDA ensures each sensor periodic and guaranteed transmission opportunities within $N$ TBs. 
Since the value of AoI is sensitive to the regularity of successful updates, this predictable service structure has the potential to mitigate cumulative AoI. 
We next conduct AAoI performance analysis of the D-TFDA networks.

\subsection{Performance Analysis}
The transmissions of different sensors in the D-TFDA network cannot influence each other, indicating that the state evolutions of the sensors are independent.
The D-TFDA ensures that transmissions of different sensors do not interfere, allowing us to decouple and analyze the AAoI of each sensor independently.
However, the state evolution of a single sensor remains complex.
Specifically, the deterministic token assignment pattern contrasts with the random status generation process, which combines sporadic activation and periodic generation within active periods, creating coupled state transitions involving local age, AoI, and transmission history profile of this sensor. 
A complete characterization would require a multi-dimensional Discrete-Time Markov Chain (DTMC) with a considerably large state space and complex transitions, rendering exact analysis intractable. 
To tackle this challenge, we are driven to evaluate the AAoI performance of a sensor by referring to a universal AAoI expression of a single node in the time-slotted system, presented as \cite{8648195}
\begin{equation}\label{AAoIEq}
    \overline{\Delta}_i=\lim_{T\to\infty}\frac{1}{T}\mathbb{E}\left[\sum^{K_i(T)}_{k=1}Q_i(k)\right],
\end{equation}
where $Q_i(k)$ denotes the cumulative instantaneous AoI values of sensor $i$ during the interval of the $k$th and $(k+1)$th successful transmissions of sensor $i$, and $K_i(T)$ denotes the number of successful transmissions of sensor $i$ by the end of slot $T$.
By \cite{10138556}, we have
\begin{equation}\label{QE}
    \begin{split}
        Q_i(k) & = \sum^{X_i(k)-1}_{z=0} (S_i(k-1)+z)\\
        &= \frac{X_i^2(k)-X_i(k)}{2}+S_i(k-1)X_i(k),
    \end{split}
\end{equation}
where $X_i(k)$ denotes the interval of the $k$th and $(k+1)$th successful transmissions of sensor $i$, and $S_i(k)$ is the service time of the latest generated status update of sensor $i$, i.e., $\delta_i+1$, before the $k$th successful transmission. 
By the renewal theorem \cite{ALSMEYER199437}, we can evaluate $\overline{\Delta}_i$ through deriving expectations of terms in \eqref{QE}.
The tractability of this approach stems from its decomposition structure, which transforms the intractable joint state characterization into expectations over low-dimensional random variables.
Nevertheless, the analyses of these variables are still non-trivial due to that $S_i(k-1)$ and $X_i(k)$ admit dependent statistical properties.
Therefore, rather than evaluating $\overline{\Delta}_i$ by directly applying this approach, we prefer to leverage the specific structural characteristics of the D-TFDA mechanism to permit further simplifications to our AAoI exploration.

To proceed, we first focus on the time slot intervals of consecutive assignments of the token allocated to a sensor, and have the following lemma.
\begin{lemma}\label{Period}
    Given $N$ and $R$, the network token assignment pattern of the D-TFDA mechanism exhibits periodicity over time with the period $\frac{\operatorname{lcm}(N,R)}{R}$ time slots.
\end{lemma}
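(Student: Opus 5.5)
We make the token assignment explicit by linearizing the TBs according to the order in the Definition. Under that order, TB $\langle t;r\rangle$ is the $n$th block with
\[
n(t,r)=(t-1)R+r,
\]
so that $n$ ranges over $\{1,2,\dots\}$. Tokens are assigned cyclically in ascending order. Hence the token on $\langle t;r\rangle$ is
\[
\tau(t,r)=\big((t-1)R+r-1 \bmod N\big)+1.
\]
Saying that the pattern is periodic in time with period $P$ slots means $\tau(t+P,r)=\tau(t,r)$ for all $t$ and $r$. By the formula above, this holds if and only if $PR\equiv 0 \pmod N$. The statement therefore reduces to an elementary number-theoretic fact: the smallest positive integer $P$ with $N\mid PR$ is $\operatorname{lcm}(N,R)/R$.

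For this fact we argue in two directions. First, for sufficiency, $P_0=\operatorname{lcm}(N,R)/R$ is a positive integer, and $P_0R=\operatorname{lcm}(N,R)$ is a multiple of $N$. So shifting by $P_0$ slots shifts the linear TB index by a multiple of $N$, and every token returns to the same RU, i.e., the pattern repeats. Second, for minimality, suppose $0<P<P_0$ and $N\mid PR$. Then $PR$ is a common multiple of $N$ and $R$. It is positive and strictly smaller than $\operatorname{lcm}(N,R)$, which contradicts the definition of the least common multiple. Equivalently, $N\mid PR$ iff $\frac{N}{\gcd(N,R)}\mid P$, and $\frac{N}{\gcd(N,R)}=\frac{\operatorname{lcm}(N,R)}{R}$.

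The main point needing care is the meaning of ``period''. The period should be read as the minimal shift under which the whole network pattern, with all $N$ tokens, is invariant. It should not be read as the spacing between consecutive appearances of a single token, which differs from the network period when $R\nmid N$. Since the network pattern consists of all $N$ tokens, and their assignments are orthogonal and determined by the common linear index, invariance of any one token's single-token pattern under a shift of $P$ slots already forces $N\mid PR$. So the single-token and network periods coincide, and the argument above settles both. We can illustrate this with Fig.~\ref{TA}, where $N=8$ and $R=3$ give period $24/3=8$ slots.
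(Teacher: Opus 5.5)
Your proof is correct and follows essentially the same route as the paper's: both observe that a period must span a number of TBs that is a multiple of $R$ (whole slots) and of $N$ (the token cycle restarts), so the minimal period is $\operatorname{lcm}(N,R)$ TBs, i.e., $\operatorname{lcm}(N,R)/R$ slots. Your explicit index formula $\tau(t,r)=\big(((t-1)R+r-1)\bmod N\big)+1$ makes the paper's ``first reappearance of tokens $1,\dots,R$'' argument fully precise, including the minimality step.
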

\begin{proof}
    See Appendix \ref{Proof_Period}.
\end{proof}
We then define the circular token frame (CTF) of token $\tau$ as the token assignments in $\frac{\operatorname{lcm}(N,R)}{R}$ consecutive time slots after a time slot when an assignment of token $\tau$ occurs.
For the example shown in Fig. \ref{TA}, the token assignments from slot $2$ to slot $9$ is the CTF of tokens $1,2,3$, and the token assignments from slot $3$ to slot $10$ is the CTF of tokens $4,5,6$.
By Lemma \ref{Period}, the network token assignment pattern is periodic in the time dimension for the CTF of any token.
In addition, we define the inter-token time (ITT) as the time interval between two consecutive assignments of the token allocated to sensor $i$. 

Since the single token assignment patterns of different types of tokens are the shifted versions along TBs of each other, the AAoI analysis of one sensor can be expanded to that of the other sensors.
As such, we designated sensor $i$ as the typical sensor, allocated with token $\tau$, and focus on the evaluation of $\overline{\Delta}_i$ of this sensor.
Clearly, the corresponding set of single token assignments, i.e., transmission opportunities, of a sensor within this CTF cycles in a period of $\frac{\operatorname{lcm}(N,R)}{R}$ time slots as well.
Moreover, the CTF of each type of token consists of multiple complete ITTs of this token.
Built upon this and inspired by \eqref{AAoIEq} and \eqref{QE}, we tend to analyze $\overline{\Delta}_i$ by evaluating the expected average AoI over the CTF associated with sensor $i$. 
However, the method of computing cumulative instantaneous AoI values as in \eqref{AAoIEq} and \eqref{QE} cannot be directly employed within a CTF, since the intervals corresponding to $Q_i(k)$ that overlap with a CTF may be incomplete.
To address this, we assume that the status update transmitted to the AP successfully will not be discarded from the buffer storing this status update until the next status update preemption.
Under this assumption, the evolutions of $\Delta_i(t)$ and $\delta_i(t)$ remain unaffected, while sensor $i$ always transmits in the last time slot before each ITT.
Furthermore, to simplify the analysis, we adopt an additional critical assumption under which the evolutions of $\Delta_i(t)$ and $\delta_i(t)$ are decoupled from the single token assignment pattern and have reached steady state at the beginning of the CTF of token $\tau$.
Then, we approximate $\overline{\Delta}_i$ following 
\begin{equation}\label{AoICTF}
    \overline{\Delta}_i=\frac{R}{\operatorname{lcm}(N,R)}\mathbb{E}\left[\sum^K_{k=1}\hat{Q}_i(k)\right],
\end{equation}
where $\hat{Q}_i(k)$ represents the cumulative instantaneous AoI values for the $k$th ITT in the CTF of the token allocated to sensor $i$ and $K$ is the number of ITTs in this CTF.
Similar to \eqref{QE}, we present
\begin{equation}\label{QEC}
    \hat{Q}_i(k)=\frac{\hat{X}^2_i(k)-\hat{X}_i(k)}{2}+\hat{D}_i(k)\hat{X}_i(k),
\end{equation}
where $\hat{X}_i(k)$ denotes the duration of the $k$th ITT in the CTF and $\hat{D}_i(k)$ denotes the instantaneous AoI value of sensor $i$ in the first time slot of the $k$th ITT.
To this end, we investigate the structure of the CTF and obtain the following lemma.
\begin{lemma}\label{LXK}
    Given $N$ and $R$, the CTF of a token has $K=\frac{\operatorname{lcm}(N,R)}{N}$ ITTs, in which $\frac{\operatorname{lcm}(N-R\lfloor\frac{N}{R}\rfloor,R)}{R}$ ITTs last for $\lfloor\frac{N}{R}\rfloor+1$ time slots, the remaining ITTs last for $\lfloor\frac{N}{R}\rfloor$ time slots.
    Specifically, the locations of token $\tau$ assigned in the last slot of the consecutive ITTs follow a circular shift process in the ascending direction on the RU domain. 
    We have $\hat{X}_i(k)=\lfloor\frac{N}{R}\rfloor+1$ if and only if the token $\tau$ position shifts through RU $R$.
\end{lemma}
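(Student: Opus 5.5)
The plan is to linearize the TBs and reduce everything to modular arithmetic on a single index. Under the order of Definition 1, give TB $\langle t;r\rangle$ the linear index $j=(t-1)R+r$. Since the $N$ tokens are assigned cyclically in ascending order, token $\tau$ occupies exactly the TBs with $j\equiv\tau \pmod N$. Conversely, a linear index $j$ lies in slot $\lceil j/R\rceil$ and on RU $((j-1)\bmod R)+1$. Write $N=qR+s$ with $q=\lfloor N/R\rfloor$ and $s=N-R\lfloor N/R\rfloor\in\{0,\dots,R-1\}$.

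\emph{Step 1 (shift structure and ITT length).} Consider two consecutive assignments of token $\tau$ at linear indices $j$ and $j+N$, and let $r$ be the RU of $j$. Adding $N=qR+s$ to $j$ advances the slot by $q$ and the RU position by $s$.
\begin{itemize}
\item If $r+s\le R$, the next assignment is at $\langle t+q;\,r+s\rangle$.
\item If $r+s>R$, the RU position wraps past RU $R$ and the next assignment is at $\langle t+q+1;\,r+s-R\rangle$.
\end{itemize}
Thus the RU of successive assignments evolves as $r\mapsto ((r+s-1)\bmod R)+1$, which is a circular shift in the ascending direction on the RU domain. The slot gap, i.e.\ the ITT, equals $\lfloor N/R\rfloor+1$ exactly when the shift passes through RU $R$, and equals $\lfloor N/R\rfloor$ otherwise. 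This proves the last two claims of the lemma.

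\emph{Step 2 (number of ITTs).} By Lemma~\ref{Period}, the CTF spans $\operatorname{lcm}(N,R)/R$ slots, which is $\operatorname{lcm}(N,R)$ TBs. Token $\tau$ recurs every $N$ TBs, and the CTF starts right after an assignment of $\tau$. The periodicity then guarantees that the CTF contains exactly $K=\operatorname{lcm}(N,R)/N$ complete ITTs.

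\emph{Step 3 (counting long ITTs).} Let $L$ be the number of ITTs of length $q+1$. Since the ITTs tile the CTF,
\[
Kq+L=\frac{\operatorname{lcm}(N,R)}{R},
\qquad\text{so}\qquad
L=\operatorname{lcm}(N,R)\Big(\frac1R-\frac qN\Big)=\frac{\operatorname{lcm}(N,R)\,s}{NR}=\frac{s}{\gcd(N,R)}.
\]
Because $s\equiv N\pmod R$, we have $\gcd(N,R)=\gcd(s,R)$. Hence for $s>0$, $s/\gcd(s,R)=\operatorname{lcm}(s,R)/R$, which is the stated count. The case $s=0$ should be handled separately: then no wrap ever occurs and $L=0$, which matches the formula if one reads $\operatorname{lcm}(0,R)=0$.

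The main obstacle is Step 2, namely making sure that the CTF boundaries align with assignments of $\tau$ so that the ITTs in the CTF are complete. This follows because the CTF length of $\operatorname{lcm}(N,R)$ TBs is a multiple of $N$. One must also be careful with the boundary convention for "after a slot containing $\tau$" so that the slot-based and TB-based counts agree.
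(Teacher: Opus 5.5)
Your proof is correct. Steps 1 and 2 are essentially the paper's argument. The paper also gets $K$ by counting the $\operatorname{lcm}(N,R)$ TBs in one period. Its type 1/type 2 split ($s+r\le R$ versus $s+r>R$, with $s=N-R\lfloor N/R\rfloor$) is exactly your wrap/no-wrap dichotomy. Your linear index $j=(t-1)R+r$ replaces the paper's $\tau$-SSNTA bookkeeping and gives the wrapped RU $r+s-R$ directly.

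You genuinely differ in Step 3. The paper counts the long ITTs dynamically. It follows the RU of token $\tau$ around the circular shift by $s$ until the position first returns to its starting RU, and equates the number of long ITTs with the number of completed laps, $\operatorname{lcm}(s,R)/R$. This implicitly needs the orbit to close after exactly $K$ steps, i.e.\ $R/\gcd(s,R)=K$, which rests on $\gcd(s,R)=\gcd(N,R)$; the paper never states this. You instead use a static double count. The $K$ ITTs exactly tile the $\operatorname{lcm}(N,R)/R$ slots of the CTF, and each has length $\lfloor N/R\rfloor$ or $\lfloor N/R\rfloor+1$. So $K\lfloor N/R\rfloor+L=\operatorname{lcm}(N,R)/R$ pins down $L=s/\gcd(N,R)$ with no orbit analysis. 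The identity $\gcd(s,R)=\gcd(N,R)$ then appears explicitly, only to convert $L$ to the stated form.

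Your route buys three things. It is shorter. It makes the count manifestly independent of where the CTF starts. It isolates the one boundary fact it needs: the last CTF slot contains $\tau$, by Lemma~\ref{Period}. The paper's route buys a direct picture of which ITTs in the sequence are long, though your Step 1 already supplies that.

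Your treatment of $s=0$ via $\operatorname{lcm}(0,R)=0$ matches the convention the paper uses in its divisible case.
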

\begin{proof}
    See Appendix \ref{Proof_ITT}.
\end{proof}

Following Lemma \ref{LXK}, we can attain the value of each $\hat{X}_i(k)$.
Thereafter, based on the critical assumption, we can derive a closed-form approximation of $\overline{\Delta}_i$, given in the theorem below.
\begin{theorem}\label{TheoremAAoI}
    Given $N$, $R$, $p_{i,r}$, and $\pi_i=\tau$, the approximation of  $\overline{\Delta}_i$ given by \eqref{AoICTF} and \eqref{QEC} can be expanded into \eqref{AAoIApprox} at the top of the next page, where $\overline{S}_i$ denotes the steady-state service time of sensor $i$, and $\hat{p}_{i,k}$ represents the success rate of the transmission before the $k$th ITT in the CTF of token $\tau$. 
\end{theorem}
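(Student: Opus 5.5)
Since the $\hat X_i(k)$ are deterministic once $N$, $R$ and $\tau$ are fixed, the plan is to substitute \eqref{QEC} into \eqref{AoICTF} and use linearity of expectation, leaving only the expected instantaneous AoI at the start of each ITT to compute:
\[
\overline{\Delta}_i=\frac{R}{\operatorname{lcm}(N,R)}\sum_{k=1}^{K}\Big(\frac{\hat X_i^2(k)-\hat X_i(k)}{2}+\hat X_i(k)\,\mathbb{E}[\hat D_i(k)]\Big).
\]
Here $K=\operatorname{lcm}(N,R)/N$, and each $\hat X_i(k)\in\{\lfloor N/R\rfloor,\lfloor N/R\rfloor+1\}$ is determined by Lemma \ref{LXK}. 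Specifically, the RU index occupied by token $\tau$ in the last slot before ITT $k$ evolves as a circular shift, so $\hat X_i(k)=\lfloor N/R\rfloor+1$ exactly when the shift wraps past RU $R$. The same RU sequence fixes $\hat p_{i,k}$, namely $p_{i,r}$ evaluated at the RU used just before ITT $k$. The first term of the sum is therefore an explicit function of $N$, $R$ and $\tau$.

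The core step is to find $\mathbb{E}[\hat D_i(k)]$. By the buffer-retention assumption, sensor $i$ transmits in the last slot before every ITT. That transmission succeeds with probability $\hat p_{i,k}$, and then $\hat D_i(k)$ equals the service time of the buffered update at that instant. Otherwise $\hat D_i(k)=\hat D_i(k-1)+\hat X_i(k-1)$, where indices are taken cyclically in the CTF because of the periodicity in Lemma \ref{Period}. Under the critical decoupling/steady-state assumption, the service time at each transmission opportunity has the same mean $\overline{S}_i$, which is the stationary mean of the one-dimensional DTMC on the local age $\delta_i+1$. Unrolling the recursion backwards over $j$ consecutive failures gives
\[
\mathbb{E}[\hat D_i(k)]=\sum_{j\ge 0}\hat p_{i,k-j}\prod_{l=k-j+1}^{k}(1-\hat p_{i,l})\Big(\overline{S}_i+\sum_{m=k-j}^{k-1}\hat X_i(m)\Big),
\]
with all indices taken mod $K$. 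Since the failure probabilities multiply over each full period, this series is geometric in the period product $\prod_{l=1}^{K}(1-\hat p_{i,l})$. It can therefore be summed in closed form, using $\sum_j j x^j=x/(1-x)^2$ for the whole-period contributions. Substituting the result into the display above should give \eqref{AAoIApprox}.

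The expected main obstacle is $\overline{S}_i$. I would model the local age (or the time since the last generation) as a DTMC with the following transitions: while the sensor is idle, it resets with probability $\lambda_i$ per slot; within an active duration of $M_i$ slots, resets occur deterministically every $U_i$ slots. The chain has a state for the position inside the active period plus a tail of idle ages, which is one-dimensional. I would solve for its stationary distribution through the balance equations, which reduce to geometric tails in $1-\lambda_i$ after the active duration ends. I would then compute $\overline{S}_i$ as the stationary mean of $\delta_i+1$, treating the sampling instants as independent of the chain, which is what the decoupling assumption provides. The part most likely to need care is the boundary where the active period ends partway through a $U_i$-interval, namely $M_i$ not divisible by $U_i$.
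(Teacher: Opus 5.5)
There is a genuine gap at your last step: closing the recursion cyclically does not lead to \eqref{AAoIApprox}. With indices mod $K$, i.e.\ $\mathbb{E}[\hat D_i(1)]=\hat p_{i,1}\overline S_i+(1-\hat p_{i,1})\bigl(\mathbb{E}[\hat D_i(K)]+\hat X_i(K)\bigr)$, you are computing the exact periodic steady state of the AoI at the ITT boundaries. Summing your series then gives factors $\bigl(1-\prod_{l=1}^{K}(1-\hat p_{i,l})\bigr)^{-1}$, not the prefactor $\bigl(1-\frac{R}{\operatorname{lcm}(N,R)}\sum_{k}\hat X_i(k)\prod_{j=1}^{k}(1-\hat p_{i,j})\bigr)^{-1}$ of \eqref{AAoIApprox}. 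The two results really differ. Take $R\mid N$, so $K=1$, $\hat X_i(1)=X=N/R$ and $\hat p_{i,1}=p$. Your route gives $\overline S_i+\frac{X-1}{2}+\frac{(1-p)X}{p}$, whereas \eqref{AAoIApprox} gives $\overline S_i+\frac{X-1}{2p}+\frac{1-p}{p}$. They differ by $\frac{(X-1)(1-p)}{2p}$, which is nonzero whenever $X>1$ and $p<1$. For example, with $X=2$, $p=\frac12$, $\overline S_i=1$ you get $3.5$ against $3$, and in this toy case your value is actually the exact AAoI. So carrying out your plan proves a different formula, not the stated one.

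The missing idea is how the paper uses the critical assumption. The AoI entering the CTF is taken to be in steady state and decoupled from the token pattern. Its mean is therefore the unknown time average $\overline\Delta_i$ itself, not the phase-specific value $\mathbb{E}[\hat D_i(K)]+\hat X_i(K)-1$. This gives the base case $\mathbb{E}[\hat D_i(1)]=\hat p_{i,1}\overline S_i+(1-\hat p_{i,1})(\overline\Delta_i+1)$. Forward induction with your recursion, without wrap-around, then yields \eqref{EDk}, with the convention $\hat X_i(0)=1$ absorbing the $+1$. Substituting into \eqref{AoICTF}--\eqref{QEC} makes $\overline\Delta_i$ appear linearly on both sides. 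Let $P=\frac{R}{\operatorname{lcm}(N,R)}\sum_{k}\hat X_i(k)\prod_{j=1}^{k}(1-\hat p_{i,j})$. Two facts finish the computation. First, $\sum_{z=1}^{k}\hat p_{i,z}\prod_{j=z+1}^{k}(1-\hat p_{i,j})=1-\prod_{j=1}^{k}(1-\hat p_{i,j})$. Second, $\sum_{k=1}^{K}\hat X_i(k)=\operatorname{lcm}(N,R)/R$, which turns the $-\hat X_i(k)/2$ terms into $-\frac12$ and the $\overline S_i$ terms into $(1-P)\overline S_i$. Together they give $(1-P)\overline\Delta_i=(1-P)\overline S_i+(\cdots)$, and dividing by $1-P$ yields exactly \eqref{AAoIApprox}. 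Your DTMC plan for $\overline S_i$ matches what the paper does after the theorem and is not where the problem lies.
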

\begin{figure*}[t]
    \begin{equation}\label{AAoIApprox}
    \overline{\Delta}_i\!=\!\overline{S}_i\!+\!\left(1\!-\!\frac{R\sum^K_{k=1}\hat{X}_i(k)\prod^k_{j=1}(1-\hat{p}_{i,j})}{\operatorname{lcm}(N,R)}\right)^{-1}\!\left(\frac{R\sum^K_{k=1}\hat{X}^2_i(k)}{2\operatorname{lcm}(N,R)}\!+\!\frac{R}{\operatorname{lcm}(N,R)}\sum^K_{k=1}\hat{X}_i(k)\sum^{k-1}_{z=0}\hat{X}_i(z)\!\prod^k_{j=z+1}(1-\hat{p}_{i,j})\!-\!\frac{1}{2}\right).
\end{equation}
\begin{equation}
    \begin{split}
        \overline{S}_i\!
        =\!\frac{\lambda_i}{(M_i-1)\lambda_i\!+\!1}\!\left(\frac{(1\!+\!U_i)U_i}{2}\!\left\lfloor\frac{M_i}{U_i}\right\rfloor\!+\!\frac{\left(M_i\!-\!U_i\left\lfloor\frac{M_i}{U_i}\right\rfloor\right)\!\left(M_i\!-\!U_i\left\lfloor\frac{M_i}{U_i}\right\rfloor\!+\!1\right)}{2}\!+\!\frac{(1\!-\!\lambda_i)\!\left(\left(M_i\!-\!U_i\!\left\lfloor\frac{M_i-1}{U_i}\right\rfloor\right)\lambda_i\!+\!1\right)}{\lambda_i^2}\right).
    \end{split}
\end{equation}
\vspace{-0.5em}
\hrule
\vspace{-1.5em}
\end{figure*}
\begin{proof}
    See Appendix \ref{Proof_AAoI}.
\end{proof}
We then focus on the derivation of $\overline{S}_i$ shown in \eqref{AAoIApprox}.
Recall that the service time of the status update of a sensor depends on the local state of this sensor.
Hence, we aim to obtain the expression of $\overline{S}_i$ by investigating the local steady state of sensor $i$, which consists of two components, the local age and the active/idle state of sensor $i$.
\begin{figure}[t]
	\centering
	\includegraphics[width=0.48\textwidth]{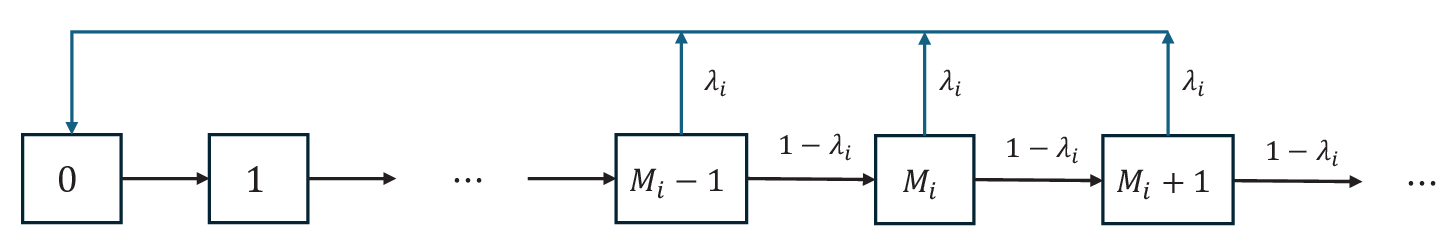}
	\caption{The DTMC of the local state of sensor $i$.}
	\label{LSMC}
\vspace{-1em} 
\end{figure}
We propose a simple-structure DTMC, depicted in Fig. \ref{LSMC}, to explore the evolution of the locate state of sensor $i$. 
Let $A_i$ and $\mu_{A_i}$ denote the state and the associated steady-state probability of the DTMC, respectively.
Specifically, $A_i=0,1,\cdots,M_i-1$ denote the $(A_i+1)$th slot of the active duration of sensor $i$, and states $A_i\ge M_i$ denote $(A_i-M_i+1)$th slots sensor $i$ maintains idle after the active duration.
This DTMC can jointly characterize the evolutions of the two components of the local state of sensor $i$ only by one-dimensional state, thus facilitating the analysis of the local steady state.
According to the structure of the DTMC, we have $\mu_{A_i}=\mu_{A_i+1}$ for $A_i<M_i-1$, while $\mu_{A_i+1}=(1-\lambda_i)\mu_{A_i}$ and $\mu_0=\lambda_i\mu_{A_i}$ for $A_i\ge M_i-1$.
Drawing on this, we can present
\begin{equation}
    \mu_{A_i}=
    \begin{cases}
        \frac{\lambda_i}{(M_i-1)\lambda_i+1},& \text{for }A_i=0,1,\cdots,M_i-1,\\
        \frac{\lambda_i(1-\lambda_i)^{A_i-M_i+1}}{(M_i-1)\lambda_i+1},& \text{otherwise.}
    \end{cases}
\end{equation}
In addition, we denote the associated service time of the status update corresponding to state $A_i$ by $S_i(A_i)$, and its expression can be given by
\begin{equation}
    \begin{split}
       &S_i(A_i)=\\
    &\begin{cases}
        A_i+1-U_i\left\lfloor\frac{A_i}{U_i}\right\rfloor,& \text{for }A_i=0,1,\cdots,M_i-1,\\
        A_i+1-\!U_i\!\left\lfloor\frac{M_i\!-\!1}{U_i}\right\rfloor,& \text{otherwise,}
    \end{cases} 
    \end{split}
\end{equation}
in view of the status update generation pattern.
Consequently, the expression of $\overline{S}_i$ can be achieved following $\overline{S}_i=\sum_{A_i}\mu_{A_i}S_i(A_i)$, shown in at the top of this page.

With Theorem~\ref{TheoremAAoI}, we can assess the AAoI performance of a D-TFDA network given a token allocation policy $\bm{\pi}$.
A natural implication is that different choices of $\bm{\pi}$ lead to different AAoI outcomes, since $\bm{\pi}$ determines which sensor accesses which token and consequently shapes the per-sensor inter-token intervals and the corresponding success rates used in~\eqref{AAoIApprox}.
This observation suggests that the design of the token allocation policy is a lever that can be exploited to improve the network-level information freshness.
We are therefore motivated to develop systematic algorithms for constructing an AAoI-efficient $\bm{\pi}$, which is the focus of the next section.

\section{AoI-Oriented Token Allocation Policy}
\subsection{Problem Formulation and Analysis}
From Theorem~\ref{TheoremAAoI}, the AAoI of each sensor depends on the token allocated to it.
Therefore, the network-wide AAoI performance can be optimized by properly designing the token allocation policy $\bm{\pi}$.
Formally, the AAoI-optimal token allocation problem is formulated as
\begin{subequations}\label{OptProblem}
\begin{align}
    \min_{\bm{\pi}}\quad & \overline{\Delta}^{\bm{\pi}} \triangleq \frac{1}{N}\sum^{N}_{i=1}\overline{\Delta}_i^{\pi_i} \label{OptObj}\\
    \text{s.t.}\quad & \pi_i \neq \pi_{i'},\quad \forall\, i \neq i', \label{OptConst1}\\
    & \pi_i \in \{1,2,\cdots,N\},\quad \forall\, i, \label{OptConst2}
\end{align}
\end{subequations}
where $\overline{\Delta}^{\bm{\pi}}$ denote the AAoI of the D-TFDA network when policy $\bm{\pi}$ is applied, $\overline{\Delta}_i^{\pi_i}$ is the AAoI of sensor $i$ given by~\eqref{AAoIApprox} when we have allocation $\pi_i$, and constraint~\eqref{OptConst1} ensures that each token is assigned to at most one sensor.

Problem~\eqref{OptProblem} is a finite integer optimization problem, for which the global optimum can always be found by exhaustive search.
However, since each sensor must be assigned a distinct token, there are $N!$ possible token allocations in total.
By Stirling's approximation, $N! \sim \sqrt{2\pi N}\left(\frac{N}{e}\right)^N$ \cite{lozier2003nist}, which grows super-exponentially with $N$.
As such, exhaustive search is computationally prohibitive for practical network scales.
We are thus driven to exploit the inherent structure of the network token assignment pattern to reduce the search space and enable efficient optimization.

By examining the network token assignment pattern of D-TFDA, we observe a fundamental structural property that constrains the RUs in which each token can appear, formalized in the following theorem.
\begin{theorem}\label{TheoremMod}
Given $N$ and $R$, a token $\tau \in \{1,2,\cdots,N\}$ appears only in RU $a$, where $a \equiv \tau \pmod{\gcd(N,R)}$.
\end{theorem}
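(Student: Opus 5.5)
Let the TBs be linearly ordered as in the TB definition, $\langle t;r\rangle < \langle t;r'\rangle$ for $r<r'$ and $\langle t;\cdot\rangle<\langle t';\cdot\rangle$ for $t<t'$. Index them by the single integer
\[
m=(t-1)R+r, \qquad t\ge 1,\ 1\le r\le R,
\]
so that consecutive TBs have consecutive indices and $r\equiv m \pmod R$, with representatives taken in $\{1,\dots,R\}$. Since the $N$ tokens are cyclically assigned to the TB sequence in ascending order, the first TB carries token $1$. Hence TB $m$ carries token $\tau$ exactly when
\[
m\equiv \tau \pmod N ,
\]
again with representatives in $\{1,\dots,N\}$. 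This explicit description of the network token assignment pattern is the first step. It is the same description that underlies Lemma~\ref{Period} and Lemma~\ref{LXK}, so it can be recalled from the proof of Lemma~\ref{Period} in Appendix~\ref{Proof_Period} rather than re-derived.

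The second step is to read off the RU. Write $g=\gcd(N,R)$. If token $\tau$ sits in TB $\langle t;r\rangle$, then $m=(t-1)R+r$ satisfies $m=\tau+\ell N$ for some integer $\ell\ge 0$. Reducing modulo $g$, and using $g\mid N$ and $g\mid R$, gives
\[
r\equiv m\equiv \tau \pmod g .
\]
So every RU $a$ in which token $\tau$ appears satisfies $a\equiv\tau\pmod{\gcd(N,R)}$, which is the claim.

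It is worth adding a remark that the converse also holds, since the paper will likely need it when it forms the equivalent token clusters. As $\ell$ ranges over $0,1,\dots,N/g-1$, the residues $\tau+\ell N \bmod R$ run through the whole coset $\tau+g\mathbb{Z}$ modulo $R$, because $N$ generates the subgroup $g\mathbb{Z}/R\mathbb{Z}$. Therefore token $\tau$ visits every RU $a\in\{1,\dots,R\}$ with $a\equiv\tau \pmod g$. Within the circular token frame, which contains $K=\operatorname{lcm}(N,R)/N=R/g$ ITTs by Lemma~\ref{LXK}, it visits each such RU exactly once. This is consistent with the circular-shift description in Lemma~\ref{LXK}.

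The main obstacle is bookkeeping rather than mathematics. RU indices run over $1,\dots,R$, not $0,\dots,R-1$, so one must check that the index map $m=(t-1)R+r$ and the token rule $m\equiv\tau\pmod N$ are anchored consistently, with TB $1$ carrying token $1$ as in Fig.~\ref{TA}. An off-by-one in either would shift the congruence. Once that convention is fixed, the argument is a one-line reduction modulo $g$.
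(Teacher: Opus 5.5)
Your proof is correct and follows the paper's argument: both index the TBs linearly, note that the TB index is congruent to $\tau$ modulo $N$ and to the RU index modulo $R$, and conclude $a\equiv\tau \pmod{\gcd(N,R)}$. The differences are cosmetic: you reduce modulo $g$ directly instead of invoking the generalized Chinese remainder theorem, and you handle all occurrences $m=\tau+\ell N$ at once, whereas the paper restricts to the first $\operatorname{lcm}(N,R)/R$ slots and extends using the periodicity of Lemma~\ref{Period}.
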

\begin{proof}
See Appendix \ref{ProofTheoremMod}.
\end{proof}
Theorem~\ref{TheoremMod} establishes that the RU locations of each token are fully determined by its index and $N,R$.
Building upon this, we can further characterize the complete set of RUs in which a given token appears, as stated in the following corollary.
\begin{corollary}\label{CorollaryRU}
    A certain token $\tau$ locates in $K$ different RUs, and gaps between any two adjacent RUs of these $K$ RUs are the same, equal to $\gcd(N,R)$.
\end{corollary}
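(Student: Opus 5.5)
Write $d=\gcd(N,R)$ and $L=\operatorname{lcm}(N,R)=NR/d$. I would make the token assignment explicit by numbering the TBs consecutively in the order of the definition. TB $\langle t;r\rangle$ has linear index $m=(t-1)R+r$. Because tokens are assigned cyclically in ascending order, this TB carries token $((m-1)\bmod N)+1$. Hence token $\tau$ occupies exactly the TBs with $m\equiv\tau \pmod N$, that is, $m=\tau+jN$ for $j\ge 0$. The corresponding RU is $r_j=((\tau+jN-1)\bmod R)+1$.

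By Lemma~\ref{Period} the pattern repeats every $L/R$ slots. One period contains $L$ TBs, and token $\tau$ occurs exactly $L/N=K$ times in it (Lemma~\ref{LXK}). So it suffices to study the $K$ occurrences $j=0,1,\dots,K-1$. The key step is to show that the residues $\tau-1+jN \pmod R$ for $j=0,\dots,K-1$ are pairwise distinct. Suppose $j_1N\equiv j_2N\pmod R$. Then $R\mid (j_1-j_2)N$, which is equivalent to $R/d \mid (j_1-j_2)(N/d)$. Since $\gcd(N/d,R/d)=1$, this forces $R/d\mid j_1-j_2$. Now $K=L/N=R/d$ and $|j_1-j_2|<K$, so $j_1=j_2$. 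Therefore token $\tau$ lies in exactly $K$ distinct RUs.

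Next I identify these RUs. The multiples $\{jN \bmod R: 0\le j<R/d\}$ form precisely the subgroup $d\mathbb{Z}_R=\{0,d,2d,\dots,R-d\}$ of $\mathbb{Z}_R$. They all lie in $d\mathbb{Z}_R$ because $d\mid N$, and there are $R/d$ distinct ones by the previous step, which matches $|d\mathbb{Z}_R|$. Shifting by $\tau-1$ gives the RU set
\[
\{\,r\in\{1,\dots,R\}: r\equiv \tau \pmod d\,\}=\{a,\,a+d,\,a+2d,\dots,a+(K-1)d\},
\]
where $a\in\{1,\dots,d\}$ and $a\equiv\tau\pmod d$. This is consistent with Theorem~\ref{TheoremMod}. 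The set is an arithmetic progression inside $\{1,\dots,R\}$ with common difference $d$ and exactly $R/d=K$ terms. Sorting the RUs in ascending order, adjacent RUs therefore differ by $\gcd(N,R)$.

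The main obstacle is the distinctness and completeness argument. One must be careful that ``adjacent'' refers to adjacency after sorting the RU indices, not to consecutive occurrences in time. Consecutive occurrences move by $N \bmod R$, which in general is not $d$. The subgroup argument sidesteps this, because it characterizes the set of RUs directly rather than the order in which they are visited. A secondary point is the bookkeeping of the offset $-1$ in the residues, so that RU labels run over $1,\dots,R$ rather than $0,\dots,R-1$. With this convention, the case $a=d$ corresponds to $\tau\equiv 0\pmod d$.
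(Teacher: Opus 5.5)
Your proof is correct, and it is more complete than the paper's own argument. The paper's route is as follows. By Theorem~\ref{TheoremMod}, every RU $a$ occupied by token $\tau$ satisfies $a\equiv\tau \pmod{\gcd(N,R)}$. The paper then asserts that the set of occupied RUs \emph{equals} $\{\tau+\alpha\gcd(N,R)\}\cap\{1,\dots,R\}$, counts this set as $R/\gcd(N,R)=K$, and concludes. That argument only uses the necessary direction of the congruence; the appendix proof of Theorem~\ref{TheoremMod} invokes only the trivial ``only if'' half of the generalized Chinese remainder theorem. So it shows that the occupied RUs lie in an arithmetic progression of $K$ RUs. It never shows that the $K$ occurrences per period fall in pairwise distinct RUs, which is what is needed to say that all $K$ candidate RUs are actually occupied. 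Your injectivity step, $R\mid (j_1-j_2)N \Rightarrow (R/d)\mid (j_1-j_2)$ via $\gcd(N/d,R/d)=1$, supplies exactly this missing sufficiency. Combined with the cardinality match $|d\mathbb{Z}_R|=R/d$, it completes the identification. Your argument also re-derives Theorem~\ref{TheoremMod} rather than relying on it. Choosing the representative $a\in\{1,\dots,d\}$ avoids the paper's small off-by-one in the range of $\alpha$ when $\gcd(N,R)\mid\tau$: there the paper's list produces RU $0$ and omits RU $R$. The paper's version is shorter because it reuses Theorem~\ref{TheoremMod}. It could have been closed just as briefly by citing the existence-and-uniqueness half of the generalized CRT (a solution modulo $\operatorname{lcm}(N,R)$ exists for each admissible $a$), which is equivalent to your distinctness-plus-counting argument. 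Finally, your remark that ``adjacent'' must mean adjacent after sorting is correct and worth keeping: successive occurrences in time move by $N \bmod R$, not by $\gcd(N,R)$. The paper leaves this point implicit.
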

\begin{proof}
    See Appendix \ref{ProofCorollaryRU}.
\end{proof}

Theorem~\ref{TheoremMod} and Corollary~\ref{CorollaryRU} together reveal a structured regularity in the RU occupancy of each token across the network token assignment pattern.
This regularity suggests that not all $N!$ token allocation policies are distinct in terms of their induced AAoI performance, and that a significant portion of the policy space may in fact be redundant.
Along this direction, we establish the following proposition.
\begin{proposition}\label{PropositionAETC}
    In the token assignment array, define all different types of tokens that can appear in the same RUs as a cluster.
    Then, the tokens $1, 2, \cdots, N$ can be partitioned into $\Theta=\gcd(N,R)$ equal-sized token clusters, each comprising $\Gamma=\frac{N}{\Theta}$ tokens and being pairwise disjoint from others.
    Allocating any token within the same cluster to a sensor results in the same AAoI performance for that sensor.
    
    We define these clusters as \textbf{AoI-equivalent token clusters (AETCs)}.
\end{proposition}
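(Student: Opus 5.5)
The proof has three steps. First, describe the token assignment pattern arithmetically and identify the clusters. Second, show that two tokens in the same cluster have the same single token assignment pattern up to a time shift. Third, argue that $\overline{\Delta}_i$ is invariant under such a shift.

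\textbf{Step 1: arithmetic description and the clusters.} Index the TBs linearly from zero, so that $\langle t;r\rangle$ has linear index $n=(t-1)R+(r-1)$, with $t=\lfloor n/R\rfloor+1$ and $r=(n \bmod R)+1$. Because tokens are assigned cyclically in ascending order, TB $n$ carries token $(n \bmod N)+1$. Token $\tau$ therefore occupies exactly the linear indices $n\equiv \tau-1 \pmod N$. By Theorem~\ref{TheoremMod} and Corollary~\ref{CorollaryRU}, the set of RUs visited by $\tau$ is
\[
\{(\tau-1+jN) \bmod R : j\in\mathbb{Z}\}+1 ,
\]
which is the coset $\tau-1+\gcd(N,R)\mathbb{Z}_R$ shifted by one. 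Let $\Theta=\gcd(N,R)$. Two tokens share this RU set if and only if $\tau\equiv\tau' \pmod{\Theta}$. So the clusters are exactly the residue classes modulo $\Theta$ within $\{1,\dots,N\}$. Since $\Theta\mid N$, there are $\Theta$ pairwise disjoint classes, each of size $\Gamma=N/\Theta$.

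\textbf{Step 2: same cluster implies time-shifted patterns.} Take $\tau\equiv\tau' \pmod{\Theta}$. Because they share an RU set, there exist occurrences $n$ of $\tau$ and $n'$ of $\tau'$ in the same RU, i.e. $n\equiv n' \pmod R$. We may take $n'>n$ by adding multiples of $\operatorname{lcm}(N,R)$ to $n'$, which keeps $n'$ an occurrence of $\tau'$ and keeps its RU unchanged. Then $n'-n=dR$ for some integer $d$. The occurrences of $\tau$ after $n$ are $n+jN$, and those of $\tau'$ after $n'$ are $n'+jN=n+jN+dR$, for $j\ge 0$. Adding $dR$ to a linear index leaves the RU unchanged and advances the slot by exactly $d$. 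Hence, from that point on, the single token assignment pattern of $\tau'$ equals that of $\tau$ translated by $d$ slots, with identical RUs at corresponding opportunities. This is consistent with Lemma~\ref{LXK}: the ITT lengths are determined by whether the RU position wraps through RU $R$, so the sequences $\{(\hat X_i(k),\hat p_{i,k})\}$ for $\tau$ and $\tau'$ coincide up to a cyclic shift of the index $k$ within the CTF.

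\textbf{Step 3: invariance of $\overline{\Delta}_i$ under the shift.} This is the step I expect to be the main obstacle. The expression \eqref{AAoIApprox} is written with the CTF starting at a specific ITT, and its inner products make cyclic invariance not obvious by direct algebra. I would avoid manipulating \eqref{AAoIApprox} directly and argue from the definition instead. The activation and generation process of sensor $i$ is independent of the token pattern, and under the steady-state assumption it is stationary. So for sensor $i$ holding $\tau'$, the joint evolution of $(\delta_i,\Delta_i)$ equals in distribution the evolution for $\tau$ shifted by $d$ slots. The $p_{i,r}$ coincide because the RUs coincide. A finite time shift does not change the limit $\lim_{T\to\infty}\frac{1}{T}\mathbb{E}[\sum_t\Delta_i(t)]$. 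Equivalently, the CTF of $\tau$ may be started at any of its ITTs, since by Lemma~\ref{Period} the pattern is periodic, so \eqref{AoICTF} does not depend on which ITT is labelled $k=1$.

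As a consistency check, I would also verify that \eqref{AAoIApprox} is invariant under cyclically rotating the pairs $(\hat X_i(k),\hat p_{i,k})$. This follows because the formula arises as a stationary renewal-reward average over one period. Combining the three steps, allocating any token of a cluster yields the same $\overline{\Delta}_i$.
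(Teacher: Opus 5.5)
Your Steps 1--3 are correct for the exact long-run AAoI and follow the paper's route. The paper likewise notes that the RU of a token's next appearance, $((r-1+N)\bmod R)+1$, and the length of the intervening ITT depend only on the token's current RU. Hence tokens that share an RU have time-shifted single token assignment patterns and the same long-term AAoI. Your translation of linear indices by $dR$ is an explicit form of this argument, and, like the paper's proof, it establishes the claim for the exact AAoI.

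Where you differ is the counting. The paper gets $\Theta$ clusters by circularly shifting the RU set of token $1$. It then gets $\Gamma$ by counting how many of tokens $1,\dots,N$ land in each of a cluster's $K$ RUs, with $\lfloor N/R\rfloor+1$ or $\lfloor N/R\rfloor$ tokens per RU. Your identification of clusters with residue classes modulo $\Theta$ gives the number of clusters, their disjointness, and the size $N/\Theta$ in one step, and is cleaner.

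Drop the proposed consistency check, however, because it is false. \eqref{AAoIApprox} is \emph{not} invariant under cyclic rotation of the pairs $(\hat X_i(k),\hat p_{i,k})$. For the same reason, your remark that \eqref{AoICTF} does not depend on which ITT is labelled $k=1$ holds only for the exact expectation, not once the critical assumption is imposed.

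That assumption sets the expected AoI in the transmission slot preceding the CTF equal to the time average $\overline{\Delta}_i$. The periodic-stationary value would instead be $\mathbb{E}[\hat D_i(K)]+\hat X_i(K)-1$, which depends on the phase. So the formula is not a periodic renewal-reward average.

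A counterexample: take $N=3$ and $R=2$, so $\Theta=1$ and $K=2$, with $p_{i,1}=1/2$ and $p_{i,2}=1$.
\begin{itemize}
\item The CTF of token $1$ started after its occurrence in slot $1$ gives the pairs $(1,1/2),(2,1)$, and \eqref{AAoIApprox} evaluates to $\overline{S}_i+3/5$.
\item Token $2$ is in the same cluster. Its CTF started after its occurrence in slot $1$ gives the rotated pairs $(2,1),(1,1/2)$, and \eqref{AAoIApprox} evaluates to $\overline{S}_i+2/3$.
\end{itemize}

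If you want the proposition to cover the approximate AAoI that actually enters Problem~\eqref{OptProblem}, anchor each token's CTF at an occurrence in a common RU, for example the smallest RU of its residue class. Your Step 2 then shows that the input sequences to \eqref{AAoIApprox} are identical rather than merely rotated, so no rotation invariance is needed.
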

\begin{proof}
    See Appendix \ref{ProofPropositionAETC}.
\end{proof}

These structural insights suggest that the vast policy space of $\bm{\pi}$ can be navigated more effectively, potentially overcoming the computational complexity associated with exhaustive search. 
Leveraging these theoretical observations, we now develop systematic and efficient algorithms to determine the AAoI-oriented token allocation policies in the following subsection.

\subsection{Efficient Policy Design}
Let index AETCs as $\theta\in\{1,2,\cdots,\Theta\}$, and denote the indicator that one of token in AETC $\theta$ is allocated to sensor $i$ by $x_{i,\theta}\in\{0,1\}$.
In light of Proposition \ref{PropositionAETC}, any two different $\bm{\pi}$ and $\bm{\pi}^{\prime}$ can be regarded as equivalent solutions if $x_{i,\theta}=1$ hold for some $\theta$ and all $i$ such that $\pi_i\neq\pi^{\prime}_i$.
In particular, we have
\begin{remark}\label{RemarkCoprime}
    When $N$ and $R$ are coprime, we have $\Theta=\gcd(N,R)=1$, hence all types of tokens belong to one AETC. 
    This indicates that the AAoI performance of the D-TFDA network is invariant under any token allocation policy $\bm{\pi}$. In other words, we do not need to design the AoI-oriented token allocation in this context.
\end{remark}
To be generalized, it follows that the search space of Problem~\eqref{OptProblem} can be reduced by considering only one representative solution among all equivalent candidates.
After the space reduction, Problem \eqref{OptProblem} is reformulated as
\begin{subequations}\label{OptProblemReduced}
\begin{align}
    \min_{\bm{\pi}}\quad 
    & \overline{\Delta}^{\bm{\pi}}
    = \frac{1}{N}\sum^{N}_{i=1}\overline{\Delta}_i^{\pi_i} \\
    \text{s.t.}\quad 
    & \pi_i \neq \pi_{i'},\quad \forall\, i \neq i', \\
    & \pi_i \in \{1,2,\cdots,N\},\quad \forall\, i, \\
    & \pi_i < \pi_{i'},\quad \forall\, i < i'\text{ such that}\nonumber\\
    &x_{i,\theta}=x_{i',\theta}=1,\text{ for some }\theta,\label{order}
\end{align}
\end{subequations}
where constraint \eqref{order} imposes a symmetry-breaking ordering rule to eliminate redundant equivalent solutions.
Specifically, since each group contains $\Gamma!$ equivalent permutations and there exist $\Theta$ AETCs, the cardinality of the reduced solution space becomes $\frac{N!}{(\Gamma!)^\Theta}$, which is significantly smaller than the original solution space size $N!$.

By delving into the structure of Problem \eqref{OptProblemReduced}, we find this problem is equivalent to the following min-cost flow problem, presented as
\begin{subequations}\label{MinCostFlowProblem}
\begin{align}
    \min_{\{x_{i,\theta}\}} \quad 
    & \sum_{\theta=1}^{\Theta}\sum_{i=1}^{N} x_{i,\theta}\,\overline{\Delta}_i^{\theta} \\
    \text{s.t.}\quad
    & \sum_{i=1}^{N} x_{i,\theta} = \Gamma,
    \quad \forall\, \theta, \\
    & \sum_{\theta=1}^{\Theta} x_{i,\theta} = 1,
    \quad \forall\, i, \\
    & x_{i,\theta}\in\{0,1\},
    \quad \forall\, i,\theta,
\end{align}
\end{subequations}
where $\overline{\Delta}_i^{\theta}$ denotes AAoI of sensor $i$ when $x_{i,\theta}=1$.
While Problem~\eqref{MinCostFlowProblem} is an integer program, its specific structure allows for a more computationally efficient approach. 
Specifically, by relaxing the binary constraint $x_{i,\theta}\in\{0,1\}$ into $0 \le x_{i,\theta} \le 1$, Problem~\eqref{MinCostFlowProblem} can be converted to the following standard linear programming (LP) problem:
\begin{subequations}\label{LPProblem}
\begin{align}
    \min_{\bm{x}}\quad 
    & \bm{c}^{\top}\bm{x} \\
    \text{s.t.}\quad
    & \mathbf{A}\bm{x} = \bm{b}, \\
    & \bm{x}\ge \bm{0},
\end{align}
\end{subequations}
where
\begin{equation}
    \begin{split}
    \bm{c}=
    &\big[
    \overline{\Delta}^{1}_{1},
    \overline{\Delta}^{2}_{1},
    \cdots,
    \overline{\Delta}^{\Theta}_{1},
    \overline{\Delta}^{1}_{2},
    \overline{\Delta}^{2}_{2},
    \cdots,
    \overline{\Delta}^{\Theta}_{2},
    \cdots,\\
    &\overline{\Delta}^{1}_{N},
    \overline{\Delta}^{2}_{N},
    \cdots,
    \overline{\Delta}^{\Theta}_{N}
    \big]^{\top},
\end{split}
\end{equation}
\begin{equation}
    \begin{split}
    \bm{x}=&\big[x_{1,1},x_{1,2},\cdots,x_{1,\Theta},x_{2,1},x_{2,2},\cdots,x_{2,\Theta},\cdots,\\
    &x_{N,1},x_{N,2},\cdots,x_{N,\Theta}\big]^{\top},
\end{split}
\end{equation}
and
\begin{align}
    \mathbf{A}
=
\begin{array}{@{}c@{}l@{}}
\left[
\begin{matrix}
    \bm{1}_{\Theta} &        &        &   \\
                    & \bm{1}_{\Theta} &  &   \\
                    &        & \ddots &   \\
                    &        &        & \bm{1}_{\Theta} \\
    \mathbf{I}_{\Theta} & \mathbf{I}_{\Theta} & \cdots & \mathbf{I}_{\Theta}
\end{matrix}
\right]
&
\mkern-6mu
\begin{array}{@{}l@{}}
\left.
\begin{array}{@{}c@{}}
\vphantom{\bm{1}_{\Theta}}\\
\vphantom{\bm{1}_{\Theta}}\\
\vphantom{\ddots}\\
\vphantom{\bm{1}_{\Theta}}
\end{array}
\right\}\,N\\
\vphantom{\mathbf{I}_{\Theta}}
\end{array}
\\[-0.8em]
\underbrace{
\smash{
\hphantom{
\begin{matrix}
    \bm{1}_{\Theta} &        &        &   \\
                    & \bm{1}_{\Theta} &  &   \\
                    &        & \ddots &   \\
                    &        &        & \bm{1}_{\Theta} \\
    \mathbf{I}_{\Theta} & \mathbf{I}_{\Theta} & \cdots & \mathbf{I}_{\Theta}
\end{matrix}
}
}
}_{N\Theta}
&
\end{array},
    \quad
    \bm{b}
=
\begin{bmatrix}
    1 \\
    1 \\
    \vdots \\
    1 \\
    \Gamma \\
    \Gamma \\
    \vdots \\
    \Gamma
\end{bmatrix}\mkern-8mu
\begin{array}{@{}l@{}}
\left.
\begin{array}{@{}c@{}}
\vphantom{1}\\
\vphantom{1}\\
\vphantom{\vdots}\\
\vphantom{1}
\end{array}
\right\}\,N\\
\left.
\begin{array}{@{}c@{}}
\vphantom{\Gamma}\\
\vphantom{\Gamma}\\
\vphantom{\vdots}\\
\vphantom{\Gamma}
\end{array}
\right\}\,\Theta
\end{array},
\end{align}
with $\bm{1}_{\Theta}$ denoting the all-one row vector of length $\Theta$, and $\mathbf{I}_{\Theta}$ denoting the $\Theta$-order identity matrix.

Although Problem~\eqref{LPProblem} applying the relaxation is slightly different from the original problem, the following remark establishes that this relaxation incurs no gap in optimality.
\begin{remark}\label{RemarkTUM}
The relaxation is without loss of optimality.
Since the constraint matrix $\mathbf{A}$ of Problem~\eqref{LPProblem} is totally unimodular and the right-hand side vector $\bm{b}$ is integer-valued, the Hoffman--Kruskal theorem \cite{hoffman2009integral} guarantees that every basic feasible solution of the LP is integral.
Consequently, solving the linear program~\eqref{LPProblem} yields an optimal solution that is automatically binary, thereby exactly solving the original integer program~\eqref{MinCostFlowProblem}.
\end{remark}
Built upon this groundwork, we develop an efficient algorithm that solves Problem~\eqref{OptProblem} to optimality, with the circumvention of the combinatorial complexity of the original search space, yielding the AAoI-optimal token allocation policy $\bm{\pi}^*$. The algorithm is outlined in Algorithm \ref{LPAlgOptAlloc}.
\begin{algorithm}[t]
\SetAlgoLined
\KwIn{$N$, $R$, $\{M_i\}_{i=1}^{N}$, $\{U_i\}_{i=1}^{N}$, $\{p_{i,r}\}_{i=1,r=1}^{N,R}$}
Compute $\Theta$, $\Gamma$, and partition the $N$ tokens into $\Theta$ AETCs via Proposition~\ref{PropositionAETC}\;
\For{$\theta = 1$ \KwTo $\Theta$}{
    Select a representative token $\tau_{\theta}$ from AETC $\theta$\;
    \For{$i = 1$ \KwTo $N$}{
        Compute $\overline{\Delta}_i^{\theta}$ via~\eqref{AAoIApprox} with token $\tau_\theta$\;
    }
}
Construct cost vector $\bm{c} \in \mathbb{R}^{N\Theta}$ by stacking all $\overline{\Delta}_i^{\theta}$\;
Solve LP~\eqref{LPProblem} to obtain $\bm{x}^*$\;
Recover $\hat{\bm{\pi}}^*$ from $\bm{x}^*$ as the nearest feasible binary solution\;
\Return{$\hat{\bm{\pi}}^*$}\;
\caption{LP-based AAoI-Optimal Policy Search}\label{LPAlgOptAlloc}
\end{algorithm}

While Algorithm~\ref{LPAlgOptAlloc} already reduces the search space dramatically relative to the exhaustive search method, it relies on solving the LP Problem~\eqref{LPProblem}, which has variable $\bm{x}$ of dimension $N\Theta$ and $N+\Theta$ constraints. 
When $N\Theta$ and $N+\Theta$ grow large, the resulting LP instance may incur considerable computational complexity and encounter memory bottlenecks that hinder their applicability to resource-constrained or large-scale deployments. 
These practical limitations motivate us to pursue an alternative algorithmic strategy that trades off a marginal degree of optimality for substantially improved efficiency. 
In the following, we propose a lightweight heuristic algorithm capable of producing an AoI-effective token allocation with significantly lower computational and memory overhead.

It is worth noting that Problem~\eqref{MinCostFlowProblem} bears a close structural resemblance to the classical assignment problem. 
In particular, the $\Theta$ AETCs can be viewed as $\Theta$ types of goods each with $\Gamma$ identical copies, while the $N$ sensors act as bidders that each demand exactly one good with a cost $\overline{\Delta}_i^{\theta}$. 
This interpretation naturally invites an auction-theoretic solution paradigm~\cite{bertsekas1988auction}.
Inspired by this connection, we propose an efficient heuristic token allocation algorithm that applies the auction mechanism, shown in Algorithm \ref{AlgHeuristic}.
\begin{algorithm}[t]
\SetAlgoLined
\KwIn{$\varepsilon > 0$, $N$, $R$, $\{M_i\}_{i=1}^{N}$, $\{U_i\}_{i=1}^{N}$, $\{p_{i,r}\}_{i=1,r=1}^{N,R}$}
Compute $\Theta$, $\Gamma$, partition the $N$ tokens into $\Theta$ AETCs, and calculate $\{\overline{\Delta}^\theta_i\}^{N,\Theta}_{i=1,\theta=1}$ as that in Algorithm \ref{LPAlgOptAlloc}\;
Initialize extra cost vector $\bm{E} \leftarrow \bm{0} \in \mathbb{R}^{\Theta}$, the set of unallocated sensors $\mathcal{N} \leftarrow \{1, 2, \dots, N\}$, and allocation lists $\mathcal{L}_\theta \leftarrow \emptyset$ for all $\theta \in \{1, \dots, \Theta\}$\;
\While{$\mathcal{N} \neq \emptyset$}{
    Select an arbitrary sensor $i \in \mathcal{N}$ and set $\mathcal{N} \leftarrow \mathcal{N} \setminus \{i\}$\;
    $\theta^* \leftarrow \arg\min_{\theta} \big( \overline{\Delta}_i^{\theta} + E_{\theta} \big)$\;
    $V_1 \leftarrow \overline{\Delta}_i^{\theta^*} + E_{\theta^*}$\;
    $V_2 \leftarrow \min_{\theta \neq \theta^*} \big( \overline{\Delta}_i^{\theta} + E_{\theta} \big)$\;
    $E_{\theta^*} \leftarrow E_{\theta^*} + V_2 - V_1 + \varepsilon$\;
    $\mathcal{L}_{\theta^*} \leftarrow \mathcal{L}_{\theta^*} \cup \{i\}$\;
    \If{$|\mathcal{L}_{\theta^*}| > \Gamma$}{
        Let $i'$ be the earliest admitted sensor in $\mathcal{L}_{\theta^*}$\;
        $\mathcal{L}_{\theta^*} \leftarrow \mathcal{L}_{\theta^*} \setminus \{i'\}$\;
        $\mathcal{N} \leftarrow \mathcal{N} \cup \{i'\}$\;
    }
}
Obtain $\tilde{\bm{\pi}}^*$ from $\{\mathcal{L}_\theta\}_{\theta=1}^{\Theta}$\;
\Return{$\tilde{\bm{\pi}}^*$}\;
\caption{Heuristic Policy Search}\label{AlgHeuristic}
\end{algorithm}

In terms of computational complexity, Algorithm~\ref{AlgHeuristic} is significantly more efficient than the LP-based policy search in Algorithm~\ref{LPAlgOptAlloc}. 
Specifically, solving the LP problem by the simplex method generally incurs a complexity of $\mathcal{O}(N\Theta(N+\Theta)^2)$, while interior-point methods typically require complexity ranging from $\mathcal{O}((N+\Theta)^{1.5})$ to $\mathcal{O}((N+\Theta)^3)$ \cite{REHFELDT202260}. 
In contrast, the auction-based heuristic only involves iterative local bidding and assignment updates, leading to a much lower complexity of approximately $\mathcal{O}(N+\Theta)$ \cite{bertsekas1992auction}. 
Therefore, Algorithm~\ref{AlgHeuristic} achieves substantially lower computational overhead and is more suitable for large-scale implementations.

The performance of the proposed algorithms is examined through extensive numerical simulations in the next section.

\section{Numerical Results}
In this section, we first verify the validity of our theoretical analysis through a comparison between theoretical derivations and simulation results. Subsequently, we evaluate the effectiveness of the proposed token allocation policy search algorithms. Furthermore, we assess the overall performance of the D-TFDA network by comparing its AAoI performance against baseline transmission schemes. For a fair and statistically significant evaluation, all performance curves presented in the following figures are obtained by averaging over $10^6$ Monte-Carlo simulation runs.
In addition, $p_{i,r}$ is generated using a bounded random reliability model. 
Specifically, each STA has a mean reliability $\bar p_i\sim{\rm Beta}(\alpha,\beta)$, while correlated Gaussian perturbations with $\rho$ and $\sigma$ are added across neighboring RUs and then clipped into $[0,1]$. 
This model captures STA-level link heterogeneity and RU-level frequency selectivity, while using the Beta distribution to model bounded packet-level link reliability, as commonly adopted for wireless packet success probability and link adaptation approximation~\cite{9154253,9462466}.

\subsection{Verifications of the Analysis}
\begin{figure}[t]
	\centering
	\includegraphics[width=0.44\textwidth]{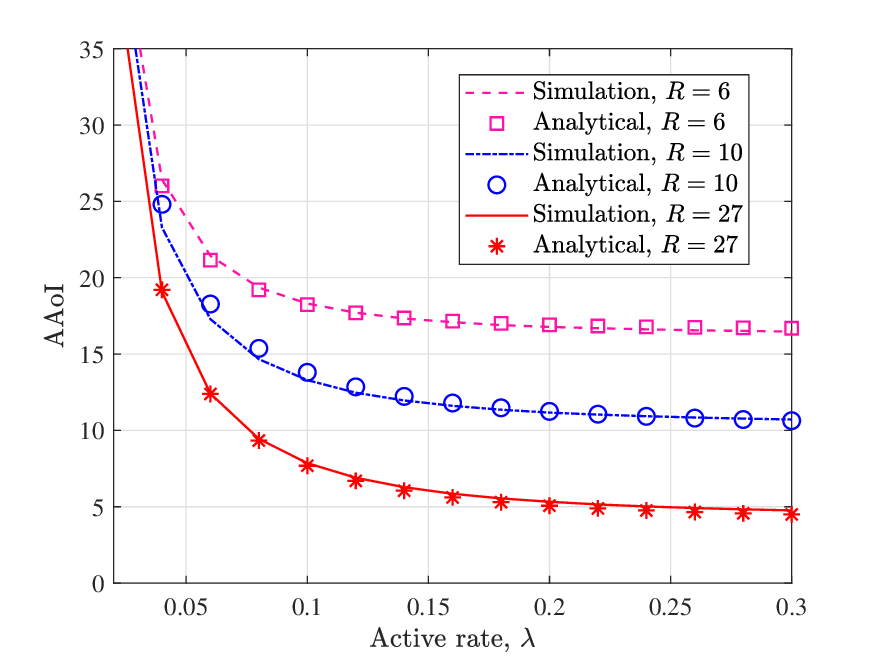}
	\caption{AAoI versus the active rate, with $N=100$ and $\lambda_i=\lambda,\forall i$.}
 \label{AAoI_v_la}
 \vspace{-1em}
\end{figure}
Fig.~\ref{AAoI_v_la} illustrates the AAoI performance versus the active rate $\lambda$, where $N=100$, all sensors have the same active rate, i.e., $\lambda_i=\lambda$, $\forall i$, $\alpha=8$, $\beta=2$, $\rho=0.6$, $\sigma=0.12$, and $\bm{\pi}=[1,2,\cdots,N]$. 
Three system settings are considered: $R=6$, $M_i\sim \mathcal U_{\mathbb Z}[32,36]$, and $U_i\sim \mathcal U_{\mathbb Z}[6,8]$; $R=10$, $M_i\sim \mathcal U_{\mathbb Z}[20,24]$, and $U_i\sim \mathcal U_{\mathbb Z}[3,5]$; and $R=27$, $M_i\sim \mathcal U_{\mathbb Z}[14,18]$, and $U_i\sim \mathcal U_{\mathbb Z}[2,4]$. 
It is shown that the analytical results closely match the simulation results under all considered settings, with the relative error below $5\%$, further verifying the accuracy of the theoretical AAoI approximation. 
In addition, the AAoI decreases as $\lambda$ increases, since a larger active rate allows sensors to generate update packets more frequently and thus improves information freshness. 
The AAoI reduction is significant when $\lambda$ is small, while the improvement gradually becomes marginal as $\lambda$ further increases.
This is because, in the low-active-rate regime, the intervals between two consecutive active durations are relatively long, so increasing $\lambda$ effectively shortens the inactive periods during which no fresh updates can be generated. 
In contrast, when $\lambda$ becomes sufficiently large, sensors are already active frequently, and the AAoI is mainly constrained by the update generation interval, the available transmission opportunities, and the success rate.

\begin{figure}[t]
	\centering
	\includegraphics[width=0.44\textwidth]{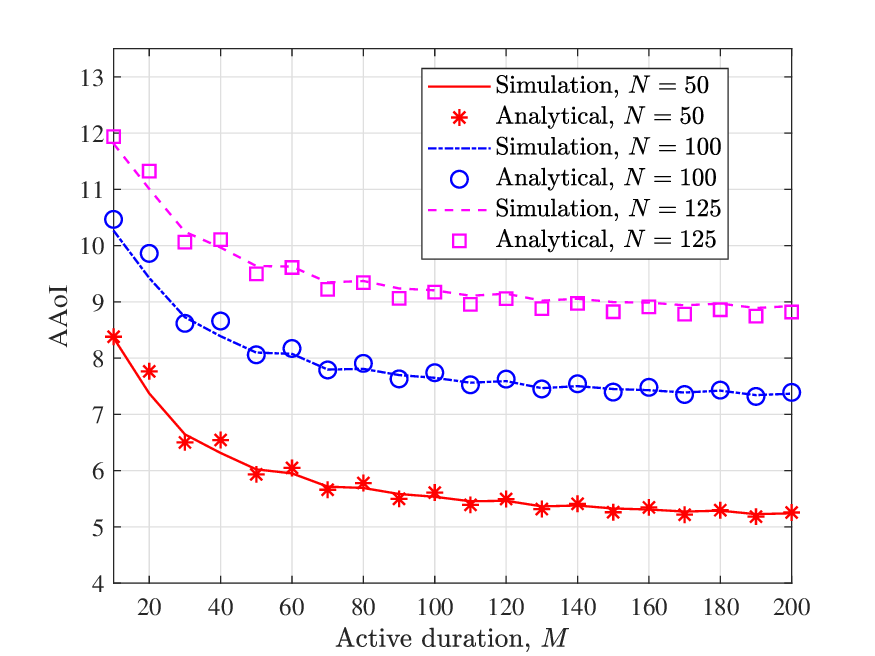}
	\caption{AAoI versus the active duration, with $R=15$ and $\lambda_i=0.12,\forall i$.}
 \label{AAoI_v_M}
 \vspace{-1em}
\end{figure}
Fig.~\ref{AAoI_v_M} depicts the AAoI as a function of the active duration, where part of the simulation setup is specified in the caption.
In all cases, we set $U_i\sim\mathcal U_{\mathbb Z}[3,5]$, $M_i=M$, and $\bm{\pi}=[1,2,\cdots,N]$, the parameters of the Beta distribution for generations of $p_{i,r}$ are the same as that in Fig.~\ref{AAoI_v_la}, and consider three network sizes with $N=50$, $N=100$, and $N=125$.
Similar to Fig.~\ref{AAoI_v_la}, the gap between the analytical and simulation results remains within $5\%$ under all considered settings.
It is observed that the AAoI decreases as $M$ increases, since a longer active duration allows each active sensor to generate more update packets within one active period and thus provides more opportunities for successful status delivery.
The slight oscillations in the AAoI curve are intuitive, since when $M$ is not an integer multiple of $U_i$, extending the active duration may introduce additional slots without new status update generation, which provides limited freshness improvement.
Once $M$ reaches the next multiple of $U_i$, one more update can be generated and potentially transmitted, thereby tending to reduce the AAoI and producing the observed fluctuation.
Moreover, a larger value of $N$ leads to a higher AAoI, since more sensors share the same RU resources and each sensor obtains transmission opportunities less frequently under the same token allocation structure.

\begin{figure}[t]
	\centering
	\includegraphics[width=0.44\textwidth]{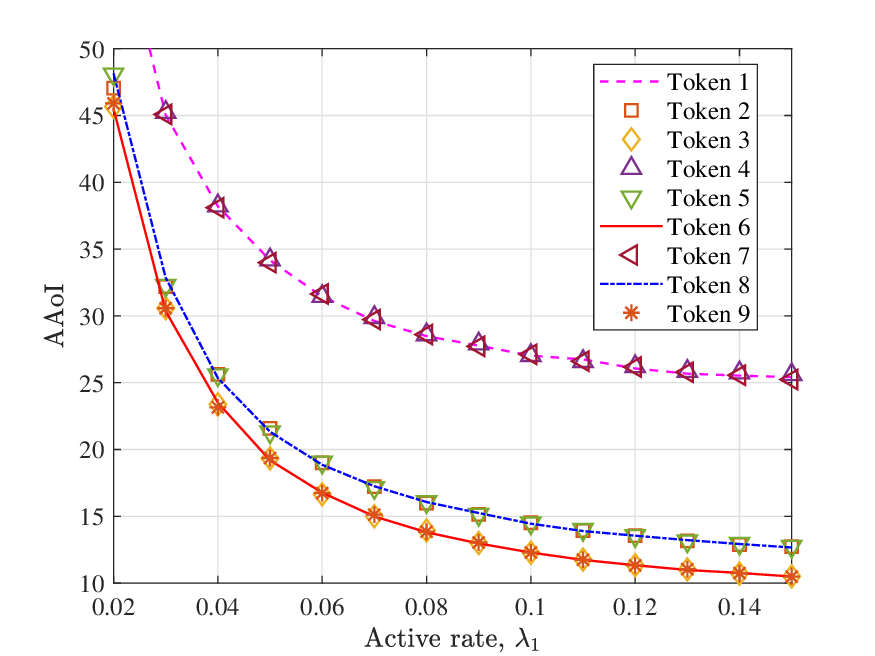}
	\caption{AAoIs of sensor $1$ allocated with all tokens, with $N=9$ and $R=6$.}
 \label{AAoI_Diff_TAs}
 \vspace{-1em}
\end{figure}
Fig.~\ref{AAoI_Diff_TAs} shows the AAoI performance of sensor $1$ when it is assigned different tokens, where part of the simulation setup is given in the caption.
In Fig.~\ref{AAoI_Diff_TAs}, the packet success probabilities are generated with $\alpha=2$, $\beta=2$, $\rho=0.2$, and $\sigma=0.24$, while $M_i\sim\mathcal U_{\mathbb Z}[14,18]$ and $U_i\sim\mathcal U_{\mathbb Z}[2,4]$.
According to Proposition~\ref{PropositionAETC}, the network has $\Theta=3$ AETCs, corresponding to the token sets $\{1,4,7\}$, $\{2,5,8\}$, and $\{3,6,9\}$.
It can be seen that, within each AETC, the AAoI curves associated with different tokens are almost identical over the considered range of $\lambda_1$.
This confirms that tokens belonging to the same AETC yield equivalent AAoI performance for the same sensor, thereby validating Proposition~\ref{PropositionAETC}.

\begin{figure}[t]
	\centering
	\includegraphics[width=0.44\textwidth]{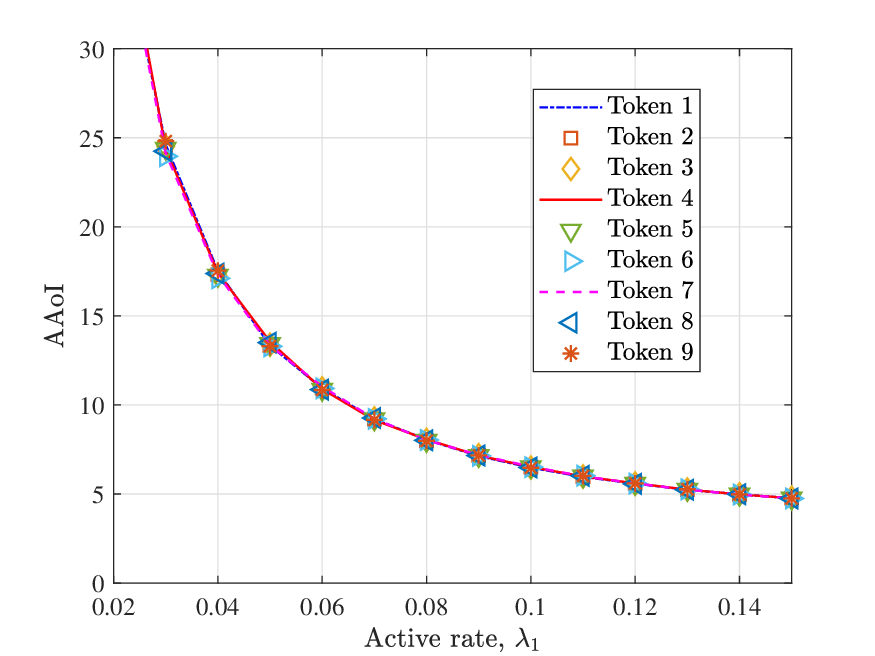}
	\caption{AAoIs of sensor $1$ allocated with all tokens, with $N=9$ and $R=5$.}
 \label{AAoI_Diff_TAs_2}
 \vspace{-1em}
\end{figure}
Fig.~\ref{AAoI_Diff_TAs_2} illustrates AAoI of sensor $1$ under different token allocations, using the same simulation setup as Fig.~\ref{AAoI_Diff_TAs} except that $R$ is changed to $5$.
Since $N=9$ and $R=5$ are coprime, all tokens belong to a single AETC.
As shown in the figure, the AAoI curves corresponding to all nine tokens almost completely overlap over the whole range of $\lambda_1$, following the statement given in Remark~\ref{RemarkCoprime}.

\subsection{Performance Evaluation and Comparison}
We first verify the effectiveness of the proposed token allocation policy search algorithms through the following numerical comparison.

\begin{figure}[t]
	\centering
	\includegraphics[width=0.44\textwidth]{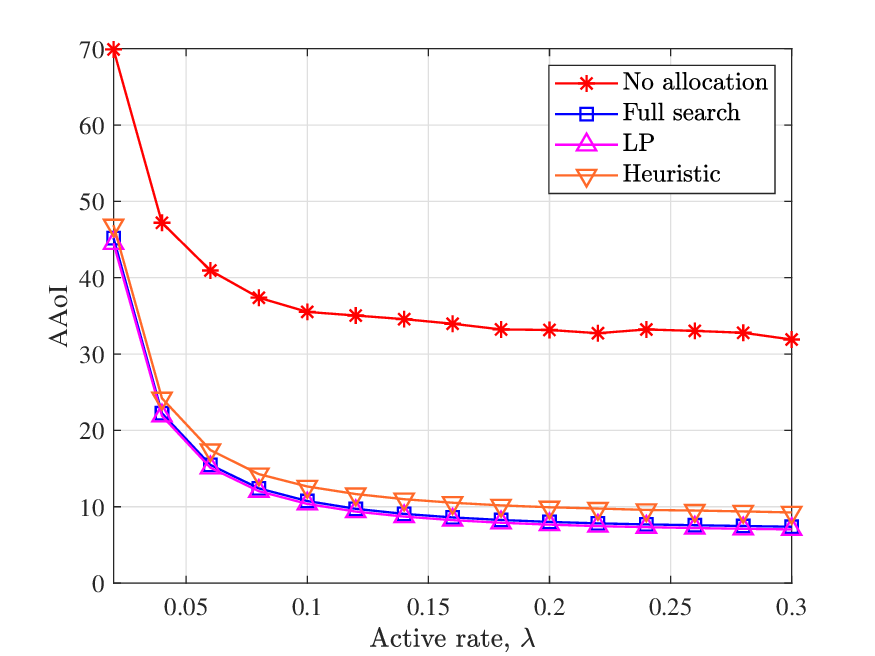}
	\caption{AAoIs of the D-TFDA network applying different token allocation policies, with $N=15$ and $R=6$.}
 \label{AAoI_Diff_TAPs}
 \vspace{-1em}
\end{figure}
Fig.~\ref{AAoI_Diff_TAPs} compares the AAoI performance of the D-TFDA network adopting different token allocation policy search algorithms as the increase of the active rate.
We set $M_i\sim\mathcal U_{\mathbb Z}[5,25]$, $U_i\sim\mathcal U_{\mathbb Z}[1,7]$, and $\lambda_i=\lambda$, $\forall i$.
$p_{i,r}$ are generated with $\alpha=2$, $\beta=2$, $\rho=0.2$, and $\sigma=0.24$, and $\varepsilon=10^{-3}$.
The ``no allocation'' baseline directly allocates token $i$ to sensor $i$ without optimizing the token allocation policy.
The full search algorithm exhaustively examines all $N!/(\Gamma!)^\Theta$ candidate policies and is therefore used as the optimal benchmark.
The comparison shows that both proposed token allocation policy search algorithms substantially outperform the no-allocation baseline over the entire range of $\lambda$.
In particular, the AAoI achieved by the LP-based policy search almost coincides with that of the full search benchmark, indicating that it can identify the optimal token allocation policy with reduced search complexity.
The heuristic policy search yields a slightly higher AAoI than the full search benchmark, but its performance remains close to the optimum and clearly superior to the no-allocation baseline.
These results reveal the effectiveness of the proposed token allocation policy search algorithms in improving the AAoI performance of the D-TFDA network.

\begin{table}[t]
\centering
\caption{Proposed Algorithm Deviation From Optimum.}
\label{tab:Search_Gap}
\vspace{-0.5em}
\scriptsize
\setlength{\tabcolsep}{3pt}
\renewcommand{\arraystretch}{1.15}
\begin{tabular}{|c|c|c|}
\hline
\textbf{Network setup}
&
\begin{tabular}{c}
\textbf{Full search vs.} \\
\textbf{LP-based search}
\end{tabular}
&
\begin{tabular}{c}
\textbf{Full search vs.} \\
\textbf{Heuristic search}
\end{tabular}
\\
\hline
$N=12$, $R=3$ & $0.2\%$   & $29.07\%$ \\
\hline
$N=14$, $R=4$ & $11.39\%$ & $25.39\%$ \\
\hline
$N=15$, $R=6$ & $15.25\%$ & $34.69\%$ \\
\hline
\end{tabular}
\vspace{-1em}
\end{table}
Table~\ref{tab:Search_Gap} compares the token allocation policies obtained by the proposed search algorithms with the optimal policy obtained by full search.
The random generation models of $M_i$, $U_i$, and $p_{i,r}$ are the same as those used in Fig.~\ref{AAoI_Diff_TAPs}, and we set $\lambda_i=\lambda=0.08$, $\forall i$.
For a token allocation policy $\bm{\pi}$, let $\mathcal S_\theta(\bm{\pi})$ denote the set of sensors assigned to the tokens in the $\theta$-th ETC.
Given the optimal policy $\bm{\pi}^\star$, the policy deviation is defined as $\frac{1}{N}\sum_{\theta=1}^{\Theta}|\mathcal S_\theta(\bm{\pi})\setminus \mathcal S_\theta(\bm{\pi}^\star)|$, which measures the normalized number of sensors assigned to different ETCs compared with the optimal policy.
Table~\ref{tab:Search_Gap} presents the average policy deviation over $1000$ independent trials.
The results show that the policies obtained by the proposed LP-based policy search are consistently closer to the optimal policies than those obtained by the proposed heuristic policy search.
This indicates that the LP-based method can more accurately recover the optimal token allocation structure.

\begin{figure}[t]
	\centering
	\includegraphics[width=0.44\textwidth]{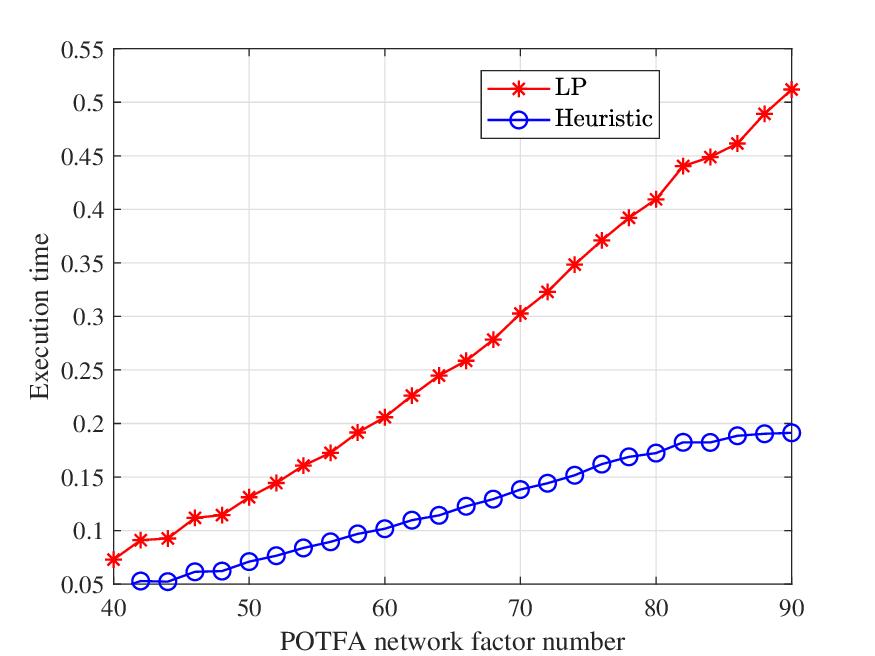}
	\caption{Execution times of the proposed policy search algorithms.}
 \label{Execution_Time}
 \vspace{-1em}
\end{figure}
Fig.~\ref{Execution_Time} plots the average execution times of the proposed LP-based policy search and heuristic policy search algorithms.
The simulations are conducted using MATLAB on a computer equipped with a $13$th Gen Intel(R) Core(TM) i7-13700K processor with $16$ cores and $32$GB of RAM.
The random generation models of $M_i$, $U_i$, and $p_{i,r}$ are the same as those used in Fig.~\ref{AAoI_Diff_TAPs}.
To evaluate the computational efficiency under different network scales, we introduce a network factor number $\Omega$ and set $N=5\Omega$ and $R=2\Omega$.
For each value of $\Omega$, the execution time is measured by recording the CPU running time, in seconds, required by MATLAB to generate the token allocation policy.
Each point in the figure is obtained by averaging the execution times over $1000$ independent runs.
We find that the execution times of both algorithms increase with $\Omega$, since a larger network introduces a larger token allocation search space.
Moreover, the heuristic policy search consistently requires less execution time than the LP-based policy search, and the computational advantage becomes more evident as $\Omega$ increases.
This confirms the remark on the complexity of the two proposed approaches that the heuristic method provides a more computationally efficient alternative for large-scale D-TFDA networks.

\begin{figure}[t]
	\centering
	\includegraphics[width=0.44\textwidth]{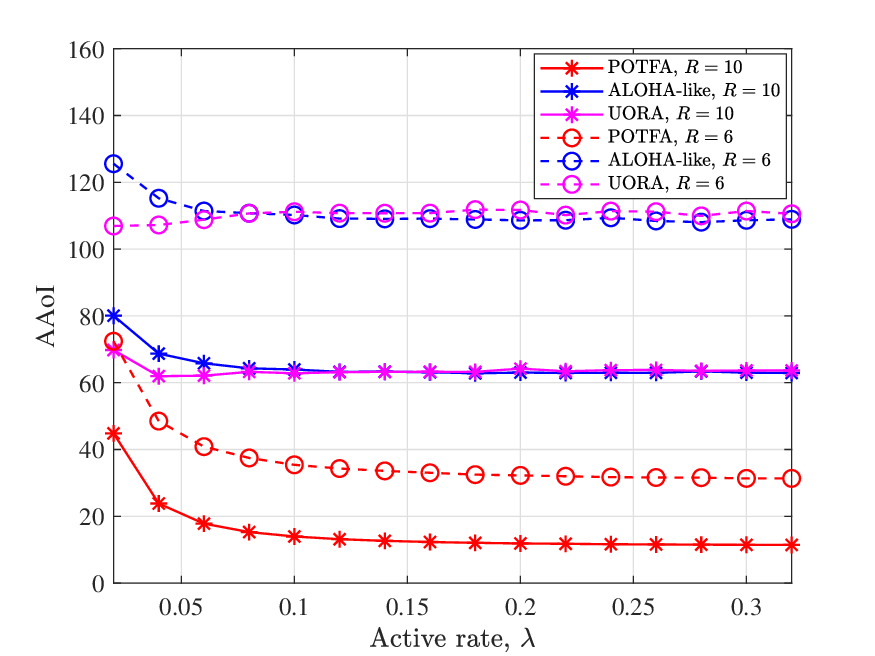}
	\caption{Comparisons of the AAoI between D-TFDA and baseline mechanisms, with $N=100$, $\lambda_i=\lambda,\forall i$.}
 \label{Baseline_Compare}
 \vspace{-1em}
\end{figure}
We further compare the proposed D-TFDA mechanism with baseline random access mechanisms to evaluate its AAoI advantage over contention-based uplink access.
\begin{itemize}
    \item \textbf{ALOHA-like mechanism:}
    Since existing AoI-oriented ALOHA mechanisms are not directly tailored to the considered multi-RU scenario, we adopt a simplified ALOHA-like baseline inspired by slotted ALOHA \cite{1146521}.
    Specifically, whenever a sensor generates a new status update, it randomly selects one RU with probability $\eta$ and transmits the update to the AP.
    The access probability $\eta$ is optimized by the genetic algorithm to obtain the minimum AAoI of this baseline.

    \item \textbf{UORA mechanism:}
    We also consider the uplink OFDMA-based random access (UORA) mechanism in WiFi 6/7, which adopts an OFDMA backoff (OBO) procedure within the OFDMA contention window (OCW) range to allow STAs to contend for random-access RUs announced by the AP.
    The UORA mechanism is characterized by the maximum retransmission number $m$ and the minimum OFDMA contention window ${\rm OCW}_{\min}$.
    Readers can be referred to the 802.11ax amendment \cite{9442429} for a more comprehensive understanding.
    In the simulations, we search over different pairs of $m$ and ${\rm OCW}_{\min}$ and use the best achieved AAoI as the optimized UORA performance.
\end{itemize}
In Fig.~\ref{Baseline_Compare}, we compare the AAoI performance of D-TFDA with the ALOHA-like and UORA mechanisms under two network settings.
In both settings, we have $N=100$, $\lambda_i=\lambda$, $\forall i$, $U_i\sim\mathcal U_{\mathbb Z}[2,4]$, and the packet success probabilities $p_{i,r}$ are generated using the same model as in Fig.~\ref{AAoI_Diff_TAPs}.
For the first setting, we set $R=10$ and $M_i\sim\mathcal U_{\mathbb Z}[23,24]$, while for the second setting, we set $R=6$ and $M_i\sim\mathcal U_{\mathbb Z}[8,10]$.
The token allocation policy of D-TFDA is obtained by the heuristic policy search algorithm.
The comparison reflects that D-TFDA achieves a much lower AAoI than both baseline mechanisms over the entire range of $\lambda$.
This improvement mainly comes from the token-based scheduled transmission structure of D-TFDA, which avoids collisions and provides more regular uplink opportunities for status update delivery.
Moreover, D-TFDA explicitly exploits the heterogeneous transmission success rates of different sensors over the RUs when optimizing the token allocation, whereas the baseline mechanisms optimize only network-wide common parameters.

\section{Conclusion}
In this paper, we developed an age of information (AoI)-oriented deterministic time-frequency distributed access (D-TFDA) mechanism that combines centralized configuration with distributed operation through a periodic and collision-free token-based access structure.
By exploiting the periodicity of the token assignment and modeling each sensor's steady local state using a one-dimensional discrete-time Markov chain (DTMC), we derived a tractable analytical approximation for evaluating the long-term average AoI (AAoI).
We further identified AoI-equivalent token clusters, which substantially reduce the token allocation search space, by exploring the transmission pattern of the D-TFDA.
Based on this structure, we developed a linear programming-based optimal policy search algorithm and an auction-inspired heuristic algorithm with lower computational and memory requirements.
Simulation results validated the proposed analysis and demonstrated that the two algorithms effectively improve the AAoI, while D-TFDA substantially outperforms random access baselines.

Future extensions of this work may investigate AoI analysis and optimization under more general periodic transmission patterns beyond the specific periodic token assignment structure adopted by D-TFDA.
Another interesting direction is to develop AoI-oriented analytical and optimization frameworks for sequence-based distributed access, where sensors determine their transmission opportunities according to individually assigned protocol sequences.

\bibliographystyle{IEEEtran}
\bibliography{Ref}

\appendices

\section{Proof of Lemma \ref{Period}}\label{Proof_Period}
At the beginning of the network operation, i.e., slot $1$, the token assignments in this slot is $1,2,\cdots,R$.
After one period, this token combination is assigned again for the first time.
According to the pattern of the token assignment, the largest TB in the previous time slot is assigned to token $N$, indicating that there are an integer multiple of $N$ TBs in one period.
Moreover, it is clear that there are an integer multiple of $R$ TBs in one period.
Thus, one period includes $\operatorname{lcm}(N,R)$ TBs.
Recalling that each slot has $R$ TBs, we can conclude that the period is $\frac{\operatorname{lcm}(N,R)}{R}$.

\section{Proof of Lemma \ref{LXK}}\label{Proof_ITT}
Consider a CTF corresponding to token $\tau$, which is allocated to sensor $i$, starting from slot $t>1$.
Recall that each ITT in the CTF follows a time slot containing token $\tau$.
Hence, by Lemma \ref{Period}, $K$ equals the number of TBs assigned with token $\tau$ within the interval from slot $t-1$ to slot $t+\frac{\operatorname{lcm}(N,R)}{R}-2$.
Since this interval includes $\operatorname{lcm}(N,R)$ TBs and one of every $N$ TBs, in their order, have one assignment of token $\tau$, we have $K=\frac{\operatorname{lcm}(N,R)}{N}$.
Moreover, this interval consists of $K$ complete sets of $N$ consecutive TBs, that are sequentially assigned with tokens 
\begin{equation}\label{SSNTA}
    (\lceil\frac{\tau}{R}\rceil-1)R +1,(\lceil\frac{\tau}{R}\rceil-1)R +2,\cdots,N,1,\cdots,(\lceil\frac{\tau}{R}\rceil-1)R.
\end{equation}
We define these token assignments as $\tau$-starting-slot network token assignments ($\tau$-SSNTA).
Notice that the values of $\hat{X}_i(k)$'s depend on how in the dimension of the TBs the corresponding $\tau$-SSNTA locates, defined as the $\tau$-SSNTA pattern.

When $N$ is divisible by $R$, it is evident that we only have one $\tau$-SSNTA pattern, i.e., tokens
presented in \eqref{SSNTA},
are assigned to all TBs from slot $t-1$ to slot $t+\frac{N}{R}-2$, $\frac{N}{R}$ time slots in total.
In this context, the CTF has $K=1=\frac{\operatorname{lcm}(N,R)}{N}$ ITT with the duration of $\hat{X}(1)=\frac{N}{R}=\lfloor\frac{N}{R}\rfloor$ time slots, and has $0=\frac{\operatorname{lcm}(N-R\lfloor\frac{N}{R}\rfloor,R)}{R}$ ITT with the duration of $\lfloor\frac{N}{R}\rfloor+1$.
This satisfies Theorem \ref{LXK}.

When $N$ is not divisible by $R$, we have $K$ different $\tau$-SSNTA patterns from slot $t-1$ to slot $t+\frac{\operatorname{lcm}(N,R)}{R}-2$.
To proceed, we assume that in the first $\tau$-SSNTA pattern, token $\tau$ is assigned to $\langle t;1\rangle$.
Then, token $\tau$ in the next $\tau$-SSNTA pattern is assigned to $\langle t+\lfloor\frac{N}{R}\rfloor;N-R\lfloor\frac{N}{R}\rfloor+1 \rangle$.
As such, the duration of the ITT between these two tokens $\tau$ is $\lfloor\frac{N}{R}\rfloor$ time slots.
We refer to the ITTs with this duration as \textbf{type 1} ITTs. 
Subsequently, assuming that in another $\tau$-SSNTA pattern, token $\tau$ is assigned to $\langle t^{'};r \rangle$, the position of token $\tau$ of this $\tau$-SSNTA pattern shifts $r-1$ RUs compared to that of token $\tau$ of the previous $\tau$-SSNTA pattern in the RU domain.
This implies that token $\tau$ of the next $\tau$-SSNTA pattern of this one should be assigned to $r-1$ TBs after $\langle t^{'}+\lfloor\frac{N}{R}\rfloor;N-R\lfloor\frac{N}{R}\rfloor+1 \rangle$.
Clearly, if $N-R\lfloor\frac{N}{R}\rfloor+1+r-1=N-R\lfloor\frac{N}{R}\rfloor+r\le R$, the next token $\tau$ is assigned to $\langle   t^{'}+\lfloor\frac{N}{R}\rfloor;N-R\lfloor\frac{N}{R}\rfloor+r \rangle$, thus the ITT between these two tokens $\tau$ is also \textbf{type 1}; 
otherwise, since $N-R\lfloor\frac{N}{R}\rfloor<R,r\le R$, the next tokens $1$ can be only assigned to a TB in slot $t^{'}+\lfloor\frac{N}{R}\rfloor+1$, in particular, $\langle t^{'}+\lfloor\frac{N}{R}\rfloor+1;N-R(\lfloor\frac{N}{R}\rfloor-1)+r \rangle$, and the duration of the ITT between these two tokens $\tau$ is $\lfloor\frac{N}{R}\rfloor+1$ time slots.
We refer to the ITTs with this duration as \textbf{type 2} ITTs.

Remark that all ITTs only have \textbf{type 1} and \textbf{type 2} for any $r$.
Therefore, we can determine $\hat{X}_i(k)$ by checking the positions of tokens $\tau$ in the RU domain of the $\tau$-SSNTA patterns from slot $t-1$ to slot $t+\frac{\operatorname{lcm}(N,R)}{R}-2$.
Specifically, the positions of tokens $\tau$ in order follow a circular shift process with the starting point at an RU and period $N-R\lfloor\frac{N}{R}\rfloor$ in ascending direction on the RU domain, and the last shift is back to the RU for the first time.
Every time the token $\tau$ position shifts through RU $R$, a \textbf{type 2} ITT will occur, and a \textbf{type 1} ITT will occur otherwise.
Based on the analysis above, the number of times the token $\tau$ position completes a circular shift is the number of \textbf{type 2} ITTs in the CTF, which can be calculated by $\frac{\operatorname{lcm}(N-R\lfloor\frac{N}{R}\rfloor,R)}{R}$.

This completes the proof.

\section{Proof of Theorem \ref{TheoremAAoI}}\label{Proof_AAoI}
First, we deduce that under the critical assumption, $\mathbb{E}[\hat{D}_i(k)]$ can be expressed by
\begin{equation}\label{EDk}
    \begin{split}        \mathbb{E}\left[\hat{D}_i(k)\right]&=\overline{S}_i\sum^k_{z=1}\hat{p}_{i,z}\prod^k_{j=z+1}(1-\hat{p}_{i,j})+\overline{\Delta}_i\prod^k_{j=1}(1-\hat{p}_{i,j})\\
    &+\sum^{k-1}_{z=0}\hat{X}(z)\prod^k_{j=z+1}(1-\hat{p}_{i,j})
    \end{split}
\end{equation}
by the induction method, where we specify $\prod^k_{j:j>k}(1-\hat{p}_{i,j})=1$ and $\hat{X}_i(0)=1$.
In particular, due to the assumption that $\Delta_i(t)$ and $\delta_i(t)$ are in steady state before the CTF considered, we can present
\begin{equation}
    \begin{split}
        \mathbb{E}\left[\hat{D}(1)\right]&=\hat{p}_{i,1}\sum_{\Delta,\delta}\theta_i(\Delta,\delta)(\delta+1)\\
        &+(1-\hat{p}_{i,1})\sum_{\Delta,\delta}\theta_i(\Delta,\delta)(\Delta+1)\\
        &=\hat{p}_{i,1}\overline{S}_i+(1-\hat{p}_{i,1})(\overline{\Delta}_i+1),
    \end{split}
\end{equation}
where $\theta_i(\Delta,\delta)$ denotes the steady-state joint distribution of AoI and local age of sensor $i$.
We assume the expression of $\mathbb{E}[\hat{D}_i(k)]$ follows \eqref{EDk}.
Then, 
\begin{equation}
    \begin{split}
        &\mathbb{E}\left[\hat{D}_i(k+1)\right]\\
        &=\hat{p}_{i,k+}\sum_{\Delta,\delta}\theta_i(\Delta,\delta)(\delta+1)\\
        &+(1-\hat{p}_{i,k+1})\left(\mathbb{E}\left[\hat{D}_i(k)\right]+\hat{X}_i(k)\right)\\
        &=\overline{S}_i[\hat{p}_{i,k+1}+(1-\hat{p}_{i,k+1})\sum^k_{z=1}\hat{p}_{i,z}\prod^k_{j=z+1}(1-\hat{p}_{i,j})]\\
        &+\overline{\Delta}_i\prod^{k+1}_{j=1}(1-\hat{p}_{i,j})+\sum^{k-1}_{z=0}\hat{X}(z)\prod^{k+1}_{j=z+1}(1-\hat{p}_{i,j})\\
        &+(1-\hat{p}_{i,k+1})\hat{X}_i(k)\\
        &=\overline{S}_i\sum^{k+1}_{z=1}\hat{p}_{i,z}\prod^{k+1}_{j=z+1}(1-\hat{p}_{i,j})+\overline{\Delta}_i\prod^{k+1}_{j=1}(1-\hat{p}_{i,j})\\
        &+\sum^{k}_{z=0}\hat{X}(z)\prod^{k+1}_{j=z+1}(1-\hat{p}_{i,j}).
    \end{split}
\end{equation}
Clearly, \eqref{EDk} still holds when we consider $k+1$.
This completes the derivation of $\mathbb{E}[\hat{D}_i(k)]$.

Substituting \eqref{QEC} and \eqref{EDk} into \eqref{AoICTF}, leveraging
\begin{equation}
    \sum^k_{z=1}\hat{p}_{i,z}\prod^k_{j=z+1}(1-\hat{p}_{i,j})=1-\prod^k_{j=1}(1-\hat{p}_j),
\end{equation}
and after some manipulations, we can obtain \eqref{AAoIApprox}.

\section{Proof of Theorem \ref{TheoremMod}}\label{ProofTheoremMod}

Consider the token assignment in the first $\frac{\operatorname{lcm}(N,R)}{R}$ time slots.
Suppose token $\tau$ is assigned to the $x$th TB, which resides in RU $a$.
The D-TFDA scheme periodically assigns tokens $1,2,\cdots,N$ to $\operatorname{lcm}(N,R)$ TBs following the order of TBs.
Under this construction, token $\tau$ and the $x$th TB satisfy the mapping that $\tau-1$ is obtained from $x-1$ by taking modulo by $N$, i.e., 
\begin{equation}
    (x-1) \equiv (\tau-1) \pmod{N},
\end{equation}
which implies $x \equiv \tau \pmod{N}$.
Meanwhile, since the token assignment array has $R$ rows corresponding to the $R$ RUs, we also have $x \equiv a \pmod{R}$.
By the generalized Chinese remainder theorem \cite{ore1952general}, $a$ and $\tau$ must satisfy 
\begin{equation}
    a \equiv \tau \pmod{\gcd(N,R)}.
\end{equation}
Furthermore, by Lemma~\ref{Period}, the token assignment pattern repeats every $\frac{\operatorname{lcm}(N,R)}{R}$ slots in the time dimension.
Hence, the token assignment in any block of $\frac{\operatorname{lcm}(N,R)}{R}$ consecutive time slots is a circular-shifted version of that in the first $\frac{\operatorname{lcm}(N,R)}{R}$ slots in the time dimension, while the RU positions in which token $\tau$ appears remain invariant across all time slots.

This completes the proof.

\section{Proof of Corollary \ref{CorollaryRU}}\label{ProofCorollaryRU}
Recall that token $\tau$ is assigned exactly $K = \frac{\operatorname{lcm}(N,R)}{N}$ times in any consecutive $\frac{\operatorname{lcm}(N,R)}{R}$ time slots.
Consider the token assignment in the first $\frac{\operatorname{lcm}(N,R)}{R}$ slots.
By Theorem~\ref{TheoremMod}, the RU $a$ in which token $\tau$ appears satisfies $a \equiv \tau \pmod{\gcd(N,R)}$.
This indicates that the collection of indices of RUs in which token $\tau$ appears is given by
\begin{equation}
    \left\{a \mid a = \tau + \alpha\gcd(N,R),\ \alpha \in \mathbb{Z}\right\} \cap \{1, 2, \cdots, R\},
\end{equation}
and the gap between any two adjacent RUs in which token $\tau$ appears must be $\gcd(N,R)$.
Let $\tau = \alpha^{\prime}\gcd(N,R) + \beta$, where $\alpha^{\prime}, \beta \in \mathbb{Z}$ and $\beta < \gcd(N,R)$.
Then, by the above set, all possible values of $\alpha$ are
\begin{equation}
    \left\{-\alpha^{\prime},\ 1-\alpha^{\prime},\ 2-\alpha^{\prime},\ \cdots,\ \frac{R}{\gcd(N,R)}-1-\alpha^{\prime}\right\}.
\end{equation}
Thus, token $\tau$ appears in exactly $\frac{R}{\gcd(N,R)}$ distinct RUs.
Furthermore, we have
\begin{equation}
    \frac{R}{\gcd(N,R)} = \frac{R}{\frac{NR}{\operatorname{lcm}(N,R)}} = \frac{\operatorname{lcm}(N,R)}{N} = K,
\end{equation}
yielding that token $\tau$ locates in $K$ different RUs.
Finally, by the proof of Lemma~\ref{Period}, the token assignment over any consecutive $\frac{\operatorname{lcm}(N,R)}{R}$ slots is a circular-shifted version of that in the first $\frac{\operatorname{lcm}(N,R)}{R}$ slots in the time dimension, where the RU locations of token $\tau$ remain unchanged.

The result follows.

\section{Proof of Proposition \ref{PropositionAETC}}\label{ProofPropositionAETC}
Recall that the $N$ tokens are periodically assigned to the TBs following the order of the TBs, and there are $R$ RUs in each time slot.
Based on this, for a token $\tau$ that appears in RU $r$ in the current time slot, the RU $r^{\prime}$ in which this token appears in the next time slot is given by
\begin{equation}
    r^{\prime} = \big((r-1 + N) \bmod R\big) + 1.
\end{equation}
Accordingly, the RU-location change of a token depends only on the RU that the token currently occupies.
As such, once several tokens appear together in the same RU within an assignment of the consecutive token sequence $1, \cdots, N$, they will always appear together in any subsequent occurrence of that sequence.
Furthermore, recall that the time-slot interval between two adjacent appearances of token $\tau$ depends solely on the RU in which the token resides.
That is, the single token assignment patterns across time slots of tokens that can appear in the same RU are equivalent in the long term.
Consequently, the AAoI performances of a sensor assigned with any of these tokens are identical.

Consider tokens $1, \cdots, N$ as assigned in the first $N$ TBs.
By Theorem~\ref{TheoremMod}, Corollary~\ref{CorollaryRU}, and the evolution of AoI, the set of tokens that appear in all $\gcd(N,R)$-interval $K$ RUs, where any token $\tau$ can appear, follows the definition of an AETC. 
Clearly, the AETCs containing tokens $\tau$ and $\tau^{\prime}$ are the same if these tokens can appear in any identical RU.
We now determine the number of distinct AETCs.
The $K$ RUs in which token $1$ can appear are RUs
\begin{equation}
 1,\ 1 + \gcd(N,R),\ \cdots,\ 1 + (K-1)\gcd(N,R),
\end{equation}
where $(K-1)\gcd(N,R) = R - \gcd(N,R)$.
Notice that the $K$ RUs in which any other token can appear are circular-shifted versions of those of token $1$ in the RU dimension.
Hence, it is straightforward that the AETCs containing tokens $1, \cdots, \gcd(N,R)$ are distinct clusters, and the AETC containing any other token coincides with one of these $\gcd(N,R)$ clusters.
Consequently, there are exactly $\gcd(N,R)$ distinct AETCs in total.

For the tokens $1, \cdots, N$, each of the first $N - \left\lfloor \frac{N}{R} \right\rfloor R$ RUs contains $\left\lfloor \frac{N}{R} \right\rfloor + 1$ tokens, and each of the remaining RUs contains $\left\lfloor \frac{N}{R} \right\rfloor$ tokens.$\left\lfloor \frac{N}{R} \right\rfloor R$
For token $\tau \in \{1, \cdots, \gcd(N,R)\}$, the first $\frac{N - \left\lfloor \frac{N}{R} \right\rfloor R}{\gcd(N,R)}$ RUs in which it can appear are RUs
\begin{equation}
    \tau,\ \tau + \gcd(N,R),\ \cdots,\ \tau + \left(\frac{N - \left\lfloor \frac{N}{R} \right\rfloor R}{\gcd(N,R)} - 1\right)\gcd(N,R).
\end{equation}
This indicates that for all $\gcd(N,R)$ AETCs, the first $\frac{N - \left\lfloor \frac{N}{R} \right\rfloor R}{\gcd(N,R)}$ associated RUs belong to the set $\{1, \cdots, N - \left\lfloor \frac{N}{R} \right\rfloor R\}$, while the remaining $K - \frac{N - \left\lfloor \frac{N}{R} \right\rfloor R}{\gcd(N,R)}$ associated RUs belong to $\{N - \left\lfloor \frac{N}{R} \right\rfloor R + 1, \cdots, R\}$.
Therefore, each AETC contains
\begin{equation}
    \begin{aligned}
        &\frac{N - \left\lfloor \frac{N}{R} \right\rfloor R}{\gcd(N,R)}\left(\left\lfloor \frac{N}{R} \right\rfloor + 1\right) + \left(K - \frac{N - \left\lfloor \frac{N}{R} \right\rfloor R}{\gcd(N,R)}\right)\left\lfloor \frac{N}{R} \right\rfloor\\
        &= K\left\lfloor \frac{N}{R} \right\rfloor + \frac{N - \left\lfloor \frac{N}{R} \right\rfloor R}{\gcd(N,R)}
        = \frac{N}{\gcd(N,R)}=\Gamma
    \end{aligned}
\end{equation}
distinct tokens.
It is worth noting that in the network token assignment pattern, the consecutive tokens $1,2,\ldots,N$ undergo repeated cyclic shifts in the RU domain with a period of $N-\left\lfloor \frac{N}{R} \right\rfloor R$.
Since $N-\left\lfloor \frac{N}{R} \right\rfloor R$ is divisible by $\gcd(N,R)$, the partitions of AETCs in different circular-shifted versions of the consecutive $N$ tokens are the same.

Then the proposition follows.

\end{document}